\newcommand{\iso}{\simeq}
\pgfplotsset{compat=1.14}
\newcommand{\tuple}[1]{\left\langle #1 \right\rangle}     
\newcommand{\set}[1]{\left\{ #1 \right\}}                 
\newcommand{\true}{\textit{true}}
\newcommand{\defeq}{\triangleq}
\renewcommand{\vec}[1]{\mathbf{#1}}
\newcommand{\transpose}[1]{#1^\intercal}
\renewcommand{\sim}[3]{#1 : #2 \rightarrow #3}
\definecolor{black}{RGB}{0,0,0}
\definecolor{green}{RGB}{25,160,80}
\definecolor{red}{RGB}{192,57,43}
\definecolor{blue}{RGB}{41,128,185}
\definecolor{orange}{RGB}{255,99,0}
\definecolor{gray}{RGB}{75,75,75}
\definecolor{lightgrey}{RGB}{240,240,240}
\definecolor{purple}{RGB}{155, 89, 182}
\definecolor{darkgrey}{RGB}{52, 73, 94}
\lstdefinestyle{base}{
  language=C,
  emptylines=1,
  breaklines=true,
  basicstyle=\color{black}\fontfamily{pzc}\selectfont\tt,
  commentstyle=\color{gray}\rm\itshape,
  keywordstyle=\rm\bfseries,
  identifierstyle=\rm\itshape,
  escapechar=@,
  morekeywords=[1]{then,do,halt},
  morekeywords=[2]{assert},
  morekeywords=[3]{requires,ensures},
  keywordstyle	= [2]\color{red}\bf,
  keywordstyle	= [3]\bf\color{gray},
  numberstyle=\footnotesize\color{gray},
  moredelim=**[is][\bf\color{green}]{@!}{!@},
  moredelim=**[is][\bf\color{red}]{@?}{?@},
  literate=
    {<=}{{$\leq$}}1
    {>=}{{$\geq$}}1
    {!}{{$\neg$}}1
    {!=}{{$\neq$}}1
    {||}{{$\lor$}}1
    {&&}{{$\land$}}1
    {->}{{$\rightarrow$}}1
    {_1}{$_{1}$}2
    {_2}{$_{2}$}2
    {_3}{$_{3}$}2
}
\newcommand{\cinline}[1]{\mbox{\lstinline[style=base,mathescape]{#1}}}
\DeclareFontFamily{U}  {MnSymbolC}{}
\DeclareFontShape{U}{MnSymbolC}{m}{n}{
    <-6>  MnSymbolC5
   <6-7>  MnSymbolC6
   <7-8>  MnSymbolC7
   <8-9>  MnSymbolC8
   <9-10> MnSymbolC9
  <10-12> MnSymbolC10
  <12->   MnSymbolC12}{}
\DeclareSymbolFont{MnSyC}{U}{MnSymbolC}{m}{n}
\DeclareMathSymbol{\righthalfcup}{\mathrel}{MnSyC}{184}
\newenvironment{mexample}[1][]{%
   \begin{example}%
   \pushQED{\qed}%
 }{%
   \popQED%
 \end{example}%
}
\newenvironment{mproofEnd}[1][]{%
   \begin{proofEnd}%
 }{%
 \end{proofEnd}%
}
\newbox\inlinefigbox
\newcommand{\citet}[1]{\cite{#1}}
\renewcommand{\mp}{\textit{mp}}
\newcommand{\TF}{\textbf{TF}}
\newcommand{\aff}{\textit{aff}}
\newcommand{\spec}{\textit{spec}}
\newcommand{\dom}{\text{dom}}
\newcommand{\Hom}{\textsf{H}}
\newcommand{\Det}{\textsf{Det}}
\newcommand{\DATS}{\mathbf{DATS}}
\newcommand{\QDATS}{\mathbb{Q}\text{-}\mathbf{DATS}}
\newcommand{\DLTS}{\mathbf{DLTS}}
\newcommand{\QDLTS}{\mathbb{Q}\text{-}\mathbf{DLTS}}
\newcommand{\tr}[3]{#2 \rightarrow_{#1} #3}
\newcommand{\DTA}{\textsf{DTA}}
\newcommand{\Homogenize}{\textsf{homog}}
\newcommand{\Int}{\mathbb{Z}}
\newcommand{\IntR}[1]{#1|_{\mathbb{Z}}}
\begin{document}

\title{Reflections on Termination of Linear Loops}

%
 \author{Shaowei Zhu
 \and
 Zachary Kincaid}
 \institute{Princeton University, Princeton NJ 08544, USA\\
 \email{\{shaoweiz,zkincaid\}@cs.princeton.edu}}
%
\maketitle              
\begin{abstract}

This paper shows how techniques for linear dynamical systems can be used to reason about the behavior of general loops.  We present two main results.  First, we show that every loop that can be expressed as a transition formula in linear integer arithmetic has a \textit{best} model as a \textit{deterministic affine transition system}.  Second, we show that for any linear dynamical system $f$ with integer eigenvalues and any integer arithmetic formula $G$, there is a linear integer arithmetic formula that holds exactly for the states of $f$ for which $G$ is eventually invariant.  Combining the two, we develop a monotone conditional termination analysis for general loops.

\end{abstract}

\section{Introduction}

Linear and affine dynamical systems are a model of computation that is easy to analyze (relative to non-linear systems), making them useful across a broad array of applications.  In the context of program analysis, affine dynamical systems correspond to loops of the form

\setbox\inlinefigbox=\hbox\bgroup
\begin{lstlisting}[style=base]
while (G(@$\vec{x}$@)) do  @$\vec{x}$@ := A@$\vec{x}$@ + @$\vec{b}$@
\end{lstlisting}
\egroup
\hbox to \hsize{\hfil \box \inlinefigbox \hss ($\dagger$)}

\noindent where $G$ is a formula, $A$ is a matrix, $\vec{x}$ is a vector of program variables, and $\vec{b}$ is a constant vector.  The termination problem for such loops has been shown to be decidable  for several variations of this model \cite{CAV:Tiwari2004,CAV:Braverman2006,SODA:OPW2015,POPL:CBKR2019,CAV:FG2019}.
However, few loops in real programs take this form, and so this work has not yet made an impact on practical termination analysis tools.  This paper bridges the gap between theory and practice, showing how techniques for linear and affine dynamical systems can be used to reason about general programs.


\begin{mexample}
We illustrate our methodology using the example program in Figure~\ref{fig:lds} (left).   First, observe
that although the body of this loop is not of the form ($\dagger$),
the value of the sum $x+y$ decreases by $z$ each iteration, and $z$
remains the same.  Thus, we can approximate the loop by the linear
dynamical system in Figure~\ref{fig:lds} (right), where the nature of
the approximation is given by the linear map in the center of
Figure~\ref{fig:lds} (i.e., the $a$ coordinate corresponds to $x+y$, and the $b$ coordinate to $z$).  The linear map is a simulation, in the sense
that it transforms the state space of the program into the state space
of the linear dynamical system so that every step in the loop has a
corresponding step in the linear dynamical system.

Next, we compute the image of the guard of the loop ($x \geq 0 \land y \geq 0$) under the 
simulation, which yields $a \geq 0$ (corresponding to the constraint $x + y \geq 0$ over the original program variables).  We can compute a closed form for this
constraint holding on the $k$th iteration of the loop by exponentiating the dynamics matrix of the linear dynamical system, multiplying on the
left by the row vector corresponding to the constraint, and on the
right by the simulation:
\[\underbrace{\begin{bmatrix}1 & 0\end{bmatrix}}_{\text{Constraint}}\underbrace{\begin{bmatrix}
      1 & -1\\
      0 & 1
    \end{bmatrix}^k}_{\text{Dynamics}}\underbrace{\begin{bmatrix} 0 & 1 & 1 & 0\\ 0 & 0 & 0 & 1\end{bmatrix}}_{\text{Simulation}}\begin{bmatrix}w\\x\\y\\z\end{bmatrix} = (x+y) - kz. \]
  We then analyze the asymptotic behavior of the closed form:
  \[ \text{As } k \rightarrow \infty, (x+y) - kz \rightarrow \begin{cases}
    -\infty & \text{if } z > 0\\
    x+y & \text{if } z = 0\\
    \infty & \text{if } z < 0
  \end{cases} \]
  We conclude that $z > 0 \lor (x+y) < 0$ is a sufficient condition for the loop to terminate.
\end{mexample}

\begin{figure}[t]
\centering
  \begin{tikzpicture}
    \node (program) {\begin{minipage}{5.5cm}
\begin{lstlisting}[style=base,numbers=left,xleftmargin=2em]
z := 1    
while (x >= 0 && y >= 0) do
  w := 3w + x + 1
  if ((x - y) % 2 == 0):
    x := x - z
   else:
    y := y - z
\end{lstlisting}
  \end{minipage}};

  \node [right of=program, node distance=8cm] (lds) {$\begin{bmatrix}a\\b\end{bmatrix} := \begin{bmatrix}
      1 & -1\\
      0 & 1
    \end{bmatrix}
      \begin{bmatrix}a\\b\end{bmatrix}$};
  \draw (program) edge[->,dashed,thick] node[above]{$  \begin{bmatrix}
    a\\
    b
  \end{bmatrix} = \begin{bmatrix}
    0 & 1 & 1 & 0\\
    0 & 0 & 0 & 1
  \end{bmatrix}\begin{bmatrix}
    w\\
    x\\
    y\\
    z
  \end{bmatrix}$} (lds);
  \end{tikzpicture}
  \caption{Over-approximation of a loop by a linear dynamical system. \label{fig:lds}}
\end{figure}
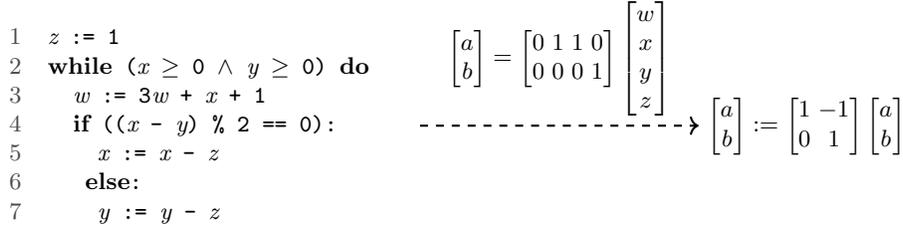

The paper is organized as follows.
To serve as the class of ``linear models'' of loops, we introduce \textit{deterministic affine transition system}s (DATS), a computational model that generalizes affine dynamical systems.  Section~\ref{sec:linear-abs} shows that any loop expressed as a linear integer arithmetic formula has a \textit{DATS-reflection}, which is a best representation of the behavior of the loop as a DATS.  Moreover, this holds for a restricted class of DATS with rational eigenvalues. Section~\ref{sec:dta} shows that for a linear map $f$ with integer eigenvalues and a linear integer arithmetic formula $G$, there is a linear integer arithmetic formula that holds exactly for those states $x$ such that $G(f^k(x))$ holds for all but finitely many $k \in \mathbb{N}$.  Section~\ref{sec:summary} brings the results together, showing that the analysis of a DATS with rational eigenvalues can be reduced to the analysis of a linear dynamical system with integer eigenvalues.  The fact that DATS-reflections are \textit{best} implies monotonicity of the analysis.  Finally, in Section~\ref{sec:evaluation}, we demonstrate experimentally that the analysis can be successfully applied to general programs, using the framework of algebraic termination analysis \cite{arxiv} to lift our loop analysis to a whole-program conditional termination analysis.  Some proofs are omitted for space, but may be found in the appendix.

\section{Preliminaries}

This paper assumes familiarity with linear algebra -- see for example \cite{Lax2007}.  We recall some basic definitions below.

In the following, a \textbf{linear space} refers to a finite-dimensional linear space over the field of rational numbers $\mathbb{Q}$.
For $V$ a linear space and $U \subseteq V$, $\textit{span}(U)$ is the linear space generated by $U$; i.e., the smallest linear subspace of $V$ that contains $U$.  An \textbf{affine subspace} of a linear space $V$ is the image of a linear subspace of $V$ under a translation (i.e., a set of the form $\{ v + v_0 : v \in U \}$ for some linear subspace $U \subseteq V$ and some $v_0 \in V$).  For any scalar $a \in \mathbb{Q}$, and any linear space $V$, we use $\underline{a}$ to denote the linear map $\underline{a} : V \rightarrow V$ that maps $v \mapsto av$ (in particular, $\underline{1}$ is the identity).
A \textbf{linear functional} on a linear space $V$ is a linear map $V \rightarrow \mathbb{Q}$; the set of all linear functionals on $V$ forms a linear space called the \textbf{dual space} of $V$, denoted $V^\star$.  A linear map $f : V_1 \rightarrow V_2$ induces a dual linear map $f^\star : V_2^\star \rightarrow V_1^\star$ where $f^\star(g) \defeq g \circ f$.  For any linear space $V$, $V$ is naturally isomorphic to $V^\star{}^\star$, where the isomorphism  maps $x \mapsto \lambda f : V^\star. f(x)$.

Let $V$ be a linear space.  A linear map $f : V \rightarrow V$ is associated with a \textbf{characteristic polynomial} $p_f(x)$, which is defined to be the determinant of $(xI - A_f)$, where $A_f$ is a matrix representation of $f$ with respect to some basis (the choice of which is irrelevant).  Define the \textbf{spectrum} (set of eigenvalues) of $f$ to be the set of (possibly complex) roots of its characteristic polynomial, $\spec(f) \defeq \{ \lambda \in \mathbb{C} : p_f(\lambda) = 0 \}$.  We say that $f$ has \textbf{rational spectrum} if $\spec(f) \subseteq \mathbb{Q}$; equivalently (by the spectral theorem -- see e.g. \cite[Ch. 6, Theorem 7]{Lax2007}):
\begin{itemize}
    \item There is a basis $\{x_1,...,x_n\}$ for $V$ consisting of \textit{generalized (right) eigenvectors}, satisfying $(f - \underline{\lambda_i})^{r_i}(x_i) = 0$ for some $\lambda_i \in \spec(f)$ and some $r_i \geq 1$ ($r_i$ is called the \textit{rank} of $x_i$)
    \item There is a basis $\{g_1,...,g_n\}$ for $V^\star$ consisting of \textit{generalized left eigenvectors}, satisfying $g_i \circ (f - \underline{\lambda_i})^{r_i} = 0$ for some $\lambda_i \in \spec(f)$ and some $r_i \geq 1$
\end{itemize}
It is possible to determine whether a linear map has rational spectrum (and compute the basis of eigenvectors for $V$ and $V^\star$) in polynomial time by computing its characteristic polynomial \cite{TCS:Keller-Gehrig1985}, factoring it \cite{MA:LLL1982},
and checking whether each factor is linear.

The syntax of linear integer arithmetic (LIA) is given
as follows:
\begin{align*}
  x \in \textsf{Variable}\\
  n \in \mathbb{Z}\\
  t \in \textsf{Term} & ::= x  \mid n \mid n \cdot t \mid t_1 + t_2\\
  F \in \textsf{Formula} & ::= t_1 \leq t_2  \mid (n\mid t) \mid F_1 \land F_2 \mid F_1 \lor F_2 \mid \lnot F  \mid \exists x. F \mid \forall x. F
\end{align*}

Let $X \subseteq \textsf{Variable}$ be a set of variables.  A \textbf{valuation}
over $X$ is a map $v : X \rightarrow \mathbb{Z}$.
If $F$ is a formula
whose free variables range over $X$ and $v$ is a valuation over $X$,
then we say that $v$ satisfies $F$ (written $v \models F$) if the
formula $F$ is true when interpreted over the standard model of the
integers, using $v$ to interpret the free variables.  We write $F
\models G$ if every valuation that satisfies $F$ also satisfies $G$.

\subsection{Transition systems}
A \textbf{transition system} $T$ is a pair $T = \tuple{S_T, R_T}$
where $S_T$ is a set of states and $R_T \subseteq S_T \times S_T$ is a
transition relation.  Within this paper, we shall assume that the state space of any transition system is a finite-dimensional linear space (over $\mathbb{Q}$).
We write $\tr{T}{x}{x'}$ to denote that the pair
$\tuple{x,x'}$ belongs to $R_T$.
We define the \textbf{domain} of a transition system $T$, $\dom(T) \defeq
\{ x \in S_T : \exists x'. \tr{T}{x}{x'} \}$, to be the set of states
that have a $T$-successor.  We define the \textbf{$\omega$-domain}
$\dom^\omega(T)$ of $T$ to be the set of states from which there exist
infinite $T$-computations:
\[ \dom^\omega(T) \defeq \set{ x_0 \in S_T : \exists x_1,x_2,... \text{ such that } \tr{T}{x_0}{\tr{T}{x_1}{\tr{T}{x_2}{\dotsi}}} }\ .\]

A \textbf{transition formula} $F(X,X')$ is an LIA formula whose free variables range over a designated finite set of variables $X$ and a set of ``primed copies'' $X' = \{ x' : x \in X\}$.  For example, a transition formula that represents the body of the loop in Figure~\ref{fig:lds} is

\begin{equation}
    \label{eq:tf}
\begin{array}{ll}
&x \geq 0 \land y \geq 0 \land w' = 3w + x + 1 \land z' = z\\
 \land & \left(\begin{array}{ll}
  &((2 \mid x-y) \land x' = x - z \land y' = y)\\
  \lor &(\lnot (2 \mid x-y) \land y' = y - z \land x' = x)
\end{array}\right)
\end{array}
\end{equation}

\noindent
We use $\TF{}$ to denote the set
of transition formulas.  A transition formula $F(X,X')$
defines a transition system where the state space is the set of functions $X \rightarrow \mathbb{Q}$, and where $v
\rightarrow_F v'$ if and only if both (1) $v$ and $v'$ map each $x \in X$ to an integer and (2) $[v,v'] \models F$, where $[v,v']$ denotes the valuation that maps each $x \in X$ to $v(x)$ and each $x' \in S'$ to $v'(x)$.  Defining the state space of $F$ to be $X \rightarrow \mathbb{Q}$ rather than $X \rightarrow \mathbb{Z}$ is a technical convenience ($X \rightarrow \mathbb{Q} \iso \mathbb{Q}^{|X|}$ is a linear space), but does not materially affect the results of this paper since only (integral) valuations are involved in transitions.

Let $T = \tuple{S_T,R_T}$ be a
transition system.  We say that $T$ is:
\begin{itemize}
\item \textbf{linear} if $R_T$ is a linear subspace of $S_T \times S_T$,
\item \textbf{affine} if $R_T$ is an affine subspace of $S_T \times
  S_T$,
\item \textbf{deterministic} if $\tr{T}{x}{x_1'}$ and $\tr{T}{x}{x_2'}$
  implies $x_1' = x_2'$
\item \textbf{total} if for all $x \in S_T$ there exists some
  $x' \in S_T$ with $\tr{T}{x}{x'}$
\end{itemize}
For example, the transition system $T$ with transition relation
\[
R_T \defeq \set{\tuple{\begin{bmatrix}x\\y\end{bmatrix},\begin{bmatrix}x'\\y'\end{bmatrix}} : \begin{bmatrix}1 & 0 \\
0 & 1\\
0 & 0\end{bmatrix}\begin{bmatrix}x'\\y'\end{bmatrix} = \begin{bmatrix}2 & 1\\ 0 & 1 \\ 0 & 1\end{bmatrix}\begin{bmatrix}x\\y\end{bmatrix} + \begin{bmatrix}0\\0\\1\end{bmatrix}}
\]
is deterministic and affine, but not linear or total.  The transition system $U$ with transition relation
\[
R_U \defeq \set{\tuple{\begin{bmatrix}x\\y\end{bmatrix},\begin{bmatrix}x'\\y'\end{bmatrix}} : \begin{bmatrix}1 & 1\end{bmatrix}\begin{bmatrix}x'\\y'\end{bmatrix} = \begin{bmatrix}\frac{1}{2} & \frac{1}{2}\end{bmatrix}\begin{bmatrix}x\\y\end{bmatrix}}
\]
is total, linear (and affine), but not deterministic.
The classical notion of a \textbf{linear dynamical system}---a
transition system where the state evolves according to a linear
map---corresponds to a \textit{total}, \textit{deterministic},
\textit{linear} transition system.  Similarly, an \textbf{affine dynamical system} is a transition system that is total, deterministic, and affine.

For any map $s : X \rightarrow Y$, and any relation $R
\subseteq X \times X$, define the image of $R$ under $s$ to be the
relation $s[R] = \set{ \tuple{s(x),s(x')} : \tuple{x,x'} \in R}$.  For
any relation $R \subseteq Y \times Y$, define the inverse image of
$R$ under $s$ to be the relation $s^{-1}[R] = \set{ \tuple{x,x'} : \tuple{s(x),s(x')} \in R}$.
Let $T = \tuple{S_T,R_T}$ and $U = \tuple{S_U,R_U}$ be transition
systems. We say that a linear map $s : S_T \rightarrow S_U$ is a
\textbf{linear simulation from $T$ to $U$}, and write $\sim{s}{T}{U}$,
if for all $\tr{T}{x}{x'}$, we have $\tr{U}{s(x)}{s(x')}$.  Observe that the following are equivalent: (1) $s$ is a simulation, (2) $s[R_T] \subseteq R_U$, and (3)
$R_T \subseteq s^{-1}[R_U]$.

An example of a simulation between a transition formula and a linear dynamical system is given in Figure~\ref{fig:lds}.  In fact, there are many linear dynamical systems that over-approximate this loop; however, the simulation and linear dynamical system given in Figure~\ref{fig:lds} is its \textit{best abstraction}.

To formalize the meaning of \textit{best abstractions}, it is convenient to use the language of category theory \cite{SAS:Kincaid2018}.
Any class of transition systems defines a category, where the objects
are transitions systems of that class, and the arrows are linear
simulations between them.  We use boldface letters (\textbf{L}inear, \textbf{A}ffine, \textbf{D}eterministic, \textbf{T}otal) to denote
categories of transition systems (e.g., $\mathbf{DATS}$ denotes the
category of \textbf{D}eterministic \textbf{A}ffine \textbf{T}ransition
\textbf{S}ystems).

If $T$ is a transition system and $\mathbf{C}$ is a category of
transition systems, a \textbf{$\mathbf{C}$-abstraction} of $T$ is a
pair $\tuple{U,s}$ consisting of a transition system $U$ belonging to
$\mathbf{C}$ and a linear simulation $s : T \rightarrow U$.  A
\textbf{$\textbf{C}$-reflection} of $T$ is a $\mathbf{C}$-abstraction
that satisfies a universal property among $\textbf{C}$-abstractions of $T$:
for any $\mathbf{C}$-abstraction $\tuple{V,t}$ of $T$ there exists a
unique simulation $\overline{t} : U \rightarrow V$ such that
$\overline{t} \circ s = t$; i.e., the following diagram commutes:
  \begin{center}
  \begin{tikzpicture}[thick]
    \matrix (m) [matrix of math nodes, row sep=2.5em, column sep=3em,
      text height=1.5ex, text depth=0.25ex] { & V\\ T &
      U\\ }; \draw (m-2-1) edge[->] node[below]{$s$} (m-2-2); \draw (m-2-1) edge[->] node[above
      left]{$t$} (m-1-2); \draw (m-2-2) edge[->,dashed]
    node[right]{$\overline{t}$} (m-1-2);
  \end{tikzpicture}
  \end{center}
  If $\mathbf{D}$ is a category of transition systems and $\mathbf{C}$ is a subcategory
  such that every transition system in $\mathbf{D}$ has a $\mathbf{C}$-reflection, we say that
  $\mathbf{C}$ is a \textbf{reflective subcategory} of $\mathbf{D}$.


Our ultimate goal is to bring techniques from linear dynamical systems to bear on transition formulas.  Figure~\ref{fig:lds} gives an example of a program and its linear dynamical system reflection.  Unfortunately, such reflections do not exist for \textit{all} transition formulas, which motivates our investigation of alternative models.
\begin{theoremEnd}[normal]{proposition} \label{prop:no-best-lts}
  The transition formula $x' = x \land x = 0$ has no
  $\mathbf{TDATS}$-reflection.
\end{theoremEnd}
\begin{mproofEnd}
  Let $F$ be the 1-dimensional transition formula $x' = x \land x =
  0$.  For a contradiction, suppose that $\tuple{A,s}$ is a
  $\mathbf{TDATS}$-reflection of $F$.  Since $F$ contains the origin,
  then so must the transition relation of $A$, and so $A$ is linear.
  Next, consider that for any $\lambda \in \mathbb{Q}$, we have the
  simulation $\textit{id} : F \rightarrow A_\lambda$, where
  $\textit{id}$ is the identity function and $A_\lambda =
  \tuple{\mathbb{Q},x \mapsto \lambda x}$.  Since $\tuple{A,s}$ is a
  reflection of $F$, for any $\lambda$, there is some $t_\lambda$ such
  that $\sim{t_\lambda}{A}{A_\lambda}$ and $\textit{id} = t_\lambda
  \circ s$.  Since $t_\lambda$ is a simulation, we have $\lambda
  t_\lambda = A_\lambda \circ t_\lambda = t_\lambda \circ A$.  Since
  $\textit{id} = t_\lambda \circ s$, we must have $t_\lambda$
  non-zero, and so $t_\lambda$ is a left eigenvector of $A$ with
  eigenvalue $\lambda$.  Since this holds for all $\lambda$, $A$ must
  have infinitely many eigenvalues, a contradiction.
\end{mproofEnd}

\section{Linear abstractions of transition formulas} \label{sec:linear-abs}

Proposition~\ref{prop:no-best-lts} shows that not every transition formula has a total deterministic affine reflection.  In the following we show that \textit{totality} is the only barrier: every transition formula has a (computable) $\DATS{}$-reflection.  Moreover, we show that every transition formula has a \textit{rational spectrum} $\DATS{}$ ($\QDATS$)-reflection, a restricted class of $\DATS{}$ that generalizes affine maps $x \mapsto A\vec{x} + \vec{b}$ where $A$ has rational eigenvalues.  The restriction on eigenvalues makes it easier to reason about the termination behavior of $\QDATS$.

In the remainder of this section, we show that every transition
formula has a $\QDATS{}$-reflection by establishing a chain of
reflective subcategories:
\begin{center}
\begin{tikzpicture}[node distance=3.5cm]
  \node (TF) {$\TF$};
  \node [right of=TF] (ATS) {$\mathbf{ATS}$};
  \node [right of=ATS] (DATS) {$\DATS{}$};
  \node [right of=DATS] (QDATS) {$\QDATS{}$};
  \draw (TF) edge[->] node[above]{Lemma~\ref{lem:best-ats}} (ATS);
  \draw (ATS) edge[->] node[above]{Lemma~\ref{lem:determinization}} (DATS);
  \draw (DATS) edge[->] node[above]{Corollary~\ref{thm:bestqdats}} (QDATS);
\end{tikzpicture}
\end{center}
The fact that $\QDATS{}$ is a reflective subcategory of $\mathbf{TF}$
then follows from the fact that a reflective subcategory of a
reflective subcategory is reflective.


\subsection{Affine abstractions of transition formulas} 

Let $F(X,X')$ be a transition formula.  The \textbf{affine
  hull} of $F$, denoted $\aff(F)$, is the smallest affine set $\aff(F)
\subseteq (X \cup X') \rightarrow \mathbb{Q} \simeq (X \rightarrow \mathbb{Q}) \times (X \rightarrow \mathbb{Q})$ that contains all of the models of $F$.  Reps et
al. give an algorithm that can be used to compute $\aff(F)$, by using
an SMT solver to sample a set of generators
\cite{VMCAI:RSY2004}.

\begin{theoremEnd}{lemma} \label{lem:best-ats}
  Let $F(X,X')$ be a transition formula.  The affine hull of
  $F$ (considered as a transition system) is the best affine
  abstraction of $F$ (where the simulation from $F$ to $\aff(F)$ is
  the identity).
\end{theoremEnd}
\begin{mproofEnd}
  Define $A$ to be the transition system whose transition relation is the affine hull
   of the transition relation of $F$.  Clearly, the identity function is a simulation from
  $F$ to $A$ since $R_F \subseteq R_A = \aff(F)$.  Suppose that $U$
  is an affine transition system and that $s : F \rightarrow U$ is a linear simulation.  Then $s$
  is also a linear simulation from $A$ to $U$, since $s^{-1}[R_U]$ is
  affine and contains $R_F$ (and therefore contains
  $\aff(F)$).
\end{mproofEnd}

\begin{mexample} \label{ex:aff}
Consider the example program in Figure~\ref{fig:lds}.  Letting $F$ denote the transition formula corresponding to the program, $\aff(F)$ can be represented as the solutions to the constraints
\begin{equation} \label{eq:aff}
 \begin{bmatrix}
   1 & 0 & 0 & 0\\
   0 & 1 & 1 & 0\\
   0 & 0 & 0 & 1
  \end{bmatrix}
  \begin{bmatrix}
  w'\\
  x'\\
  y'\\
  z'\\
    \end{bmatrix}
    =
    \begin{bmatrix}
   3 & 1 & 0 & 0\\
   0 & 1 & 1 & -1\\
   0 & 0 & 0 & 1
  \end{bmatrix}
    \begin{bmatrix}
        w\\
  x\\
  y\\
  z\\
    \end{bmatrix}
    +
    \begin{bmatrix}
      1\\0\\0\\0
    \end{bmatrix}
\end{equation}
Notice that $\aff(F)$ is 4-dimensional and has a transition relation defined by 3 constraints, and thus is \textit{not} deterministic.  The next step is to find a suitable projection onto a lower-dimensional space so that the resulting transition system is deterministic.
\end{mexample}

\subsection{Reflections via the dual space}

This section presents a key technical tool that will be used in the next two subsections to prove the existence of reflections.  For any transition system $T$, an abstraction $\tuple{U,s}$ of $T$ consisting of a transition system $U$ and a simulation $s : S_T \rightarrow S_U$ induces a subspace of $S_T^\star$, which is the range of the dual map $s^\star$ (i.e., the set of all linear functionals on $S_T$ of the form $g \circ s$ where $g \in S_U^\star$).  The essential idea is we can apply this in reverse: any subspace  $\Lambda$ of $S_T^\star$ induces a transition system $U$ and a simulation $s : T \rightarrow U$ that satisfies a universal property among all abstractions $\tuple{V,v}$ of $T$ where the range of $v^\star$ is contained in $\Lambda$. We will now formalize this idea.

Let $T$ be a transition system, and let $\Lambda$ be a subspace of
$S_T^\star$.  Define $\alpha_\Lambda(T)$ to be the pair $\alpha_\Lambda(T) \defeq
\tuple{U,s}$ consisting of a transition system $U$ and a linear
simulation $s : T \rightarrow U$ where
\begin{itemize}
    \item $s : S_T \rightarrow \Lambda^\star$ sends each $x \in S_T$ to $\lambda f : \Lambda. f(x)$
    \item $S_U \defeq \Lambda^\star$, and $R_U \defeq s[R_T] = \set{\tuple{s(x),s(x')} : \tuple{x,x'} \in R_T}$
\end{itemize}

\begin{theoremEnd}[normal]{lemma}[Dual space simulation] \label{lem:dual-space-simulation}
Let $T$ be a transition system, let $\Lambda$ be a subspace of $S_T^\star$,
and let $\tuple{U,s} = \alpha_\Lambda(T)$.  Suppose that $Z$ is a transition
system and $z : T \rightarrow Z$ is a simulation such that the range of $z^\star$ is contained in $\Lambda$.  Then there exists a unique
simulation $\overline{z} : U \rightarrow Z$ such that $\overline{z}
\circ s = z$.
\end{theoremEnd}
\begin{mproofEnd}
  The high-level intuition is that since the range of $z^\star$ is contained in $\Lambda$, we may consider it to be a map $z^\star : S_Z^\star \rightarrow \Lambda$; dualizing again, we get a map $z^\star{}^\star : \Lambda^\star \rightarrow S_Z^{\star\star}$, whose domain is $S_U$ and codomain is (isomorphic to) $S_Z$.
  
  More formally, let $j : S_Z \rightarrow S_Z^{\star\star}$ be the natural isomorphism between $S_Z$ and $S_Z^{\star\star}$ defined by
  $j(y) \defeq \lambda g : S_Z^\star. g(y)$.  Define $\overline{z} : \Lambda^\star \rightarrow S_Z$ by
  \[ \overline{z}(h) \defeq j^{-1}(\lambda g : S_Z^\star. h(g \circ z))\ . \]
  First we show that $\overline{z} \circ s = z$.  Let $x \in S_Z$.  Then we have
  \begin{align*}
     (\overline{z} \circ s)(x) &= \overline{z}(s(x))\\
     &= j^{-1}(\lambda g : S_Z^\star. (s(x))(g \circ z))\\
     &= j^{-1}(\lambda g : S_Z^\star. (\lambda f : \Lambda. f(x))(g \circ z))\\
     &= j^{-1}(\lambda g : S_Z^\star. g(z(x)))\\
     &= z(x)\ .
  \end{align*}
  Next we show that $\overline{z}$ is a simulation.  Suppose $\tr{U}{y}{y'}$.  Since $R_{U} = s[R_T]$, there is some $x,x' \in S_T$ such that $\tr{T}{x}{x'}$,
  $s(x) = y$, and $s(x') = y'$.  Since $z : T \rightarrow Z$ is a simulation, we have that $\tr{Z}{z(x)}{z(x)}$, and so
  $\tr{Z}{\overline{z}(s(x))}{\overline{z}(s(x'))}$, and we may conclude that $\tr{Z}{\overline{z}(y)}{\overline{z}(y')}$.
  
  Finally, observe that $s$ is surjective, and therefore the solution to the equation
  $\overline{z} \circ s = z$ is unique.
\end{mproofEnd}

We conclude this section by illustrating how to compute the function $\alpha$ for affine transition systems.  Suppose that $T$ is an affine transition
system of dimension $n$.  We can represent states in $S_T$ by vectors
in $\mathbb{Q}^n$, and the transition relation $R_T$ by a finite set
of transitions $B \subseteq \mathbb{Q}^n \times \mathbb{Q}^n$ that
generates $R_T$ (i.e., $R_T = \aff(B)$).  Suppose that $\Lambda$ is an
$m$-dimensional subspace of $S_T^\star$; elements of $S_T^\star$ can
be represented by $n$-dimensional row vectors, and $\Lambda$ can be
represented by a basis $\transpose{\vec{f}}_1, \dots,
\transpose{\vec{f}}_m$.  We can compute a representation of
$\tuple{U,s} = \alpha_\Lambda(T)$ as follows.  The elements of $S_U =
\Lambda^\star$ can be represented by $m$-dimensional vectors (with respect to the
basis $g_1, \dots, g_m$ such that $g_i$ is the linear map that sends
 $\transpose{\vec{f}}_j$ to 1 if $i=j$ and to 0 otherwise).  The
simulation $s$ can be represented by the $m \times n$ matrix where the
$i$th row is $\transpose{\vec{f}_i}$.  Finally, the transition
relation $R_U$ can be represented by a set of generators $\set{
  \tuple{s(\vec{x}), s(\vec{x}')} : \tuple{\vec{x},\vec{x}'} \in B}$.

\subsection{Determinization}
In this section, we show that any transition system operating over a
finite-dimensional vector space has a best deterministic abstraction,
and give an algorithm for computing the best deterministic affine
abstraction (or \textit{determinization}) of an affine transition
system.

Towards an application of Lemma~\ref{lem:dual-space-simulation}, we
seek to characterize the determinization of a transition system by a
space of functionals on its state space.  For any linear space $V$ and space of functionals $\Lambda$ on $V$, define an equivalence relation $\equiv_\Lambda$ on $V$ by $x \equiv_\Lambda y$ iff $f(x) = f(y)$ for all $f \in \Lambda$.  If $T$ is a transition
system and $\Lambda,\Lambda'$ are spaces of functionals on $S_T$,
we say that $T$ is \textbf{$(\Lambda,\Lambda')$-deterministic} if for all
$x_1,x_2$ $x_1',x_2'$ such that $x_1 \equiv_\Lambda x_2$, $\tr{T}{x_1}{x_1'}$, and
$\tr{T}{x_2}{x_2'}$, then we also have $x_1' \equiv_{\Lambda'} x_1'$.  Observe that if
$D$ is a deterministic transition system and $\sim{d}{T}{D}$ is a simulation, then
$T$ must be $(\Lambda_d,\Lambda_d)$-deterministic, where $\Lambda_d$ is the range of the dual map $d^\star$.

For any $T$ and $\Lambda$, define $\Det(T,\Lambda) \defeq \set{ f : T \text{ is } (\Lambda,\set{f})\text{-deterministic} }$ to be
the greatest set of functionals such that $T$ is $(\Lambda,\Det(T,\Lambda))$-deterministic.
Observe that $\Det(T, -)$ is a monotone operator on the complete
lattice of linear subspaces of $S_T^\star$ (i.e., if $\Lambda_1 \subseteq \Lambda_2$
then $\Det(T,\Lambda_1) \subseteq \Det(T,\Lambda_2)$, since
$\Lambda_1$ induces a coarser equivalence relation than $\Lambda_2$).  By the Knaster-Tarski fixpoint
theorem \cite{PJM:Tarski1955}, $\Det(T,-)$ has a greatest fixpoint,
which we denote by $\Det(T)$.  Then we have that
$T$ is $(\Det(T),\Det(T))$-deterministic, and $\Det(T)$ contains every space $\Lambda$ such that
$T$ is $(\Lambda,\Lambda)$-deterministic.

\begin{theoremEnd}[normal]{lemma}[Determinization] \label{lem:determinization}
  For any transition system $T$, $\alpha_{\Det(T)}(T)$ is a deterministic
  reflection of $T$.
\end{theoremEnd}
\begin{mproofEnd}
  Let $\tuple{D,d} \defeq \alpha_{\Det(T)}(T)$.  First, we show that $D$ is deterministic. Suppose that $\tr{D}{y}{y_1'}$ and $\tr{D}{y}{y_2'}$; we must show that $y_1' = y_2'$.  Since $R_D$ is defined to be $d[R_T]$,  there must be $x_1$, $x_2$, $x_1'$, and $x_2'$ in $S_T$ such that
  $\tr{T}{x_1}{x_1'}$, $\tr{T}{x_2}{x_2'}$,
  $d(x_1) = d(x_2) = y$, $d(x_1') = y_1'$, and $d(x_2') = y_2$.  Since 
  $d(x_1) = d(x_2)$, we have $(\lambda f : \Det(T). f(x_1)) = (\lambda f : \Det(T). f(x_2))$, and therefore 
  $x_1 \equiv_{\Det(T)} x_2$.  We thus have
  $x_1' \equiv_{\Det(T,\Det(T))} x_2'$, and since $\Det(T,\Det(T)) = \Det(T)$, we have $y_1' = d(x_1') = d(x_2') = y_2'$.
 
  It remains to show that $\tuple{D,d}$ is a deterministic \textit{reflection} of
  $T$. Suppose that $\tuple{U,u}$ is another deterministic abstraction of $T$. 
  Define $G$ to be the range of $u^\star$.  Since $U$ is deterministic, we must have
  $G \subseteq \Det(T,G)$, and since $\Det(T)$ is the greatest fixpoint of $\Det(T,-)$ we have $G \subseteq \Det(T)$.  By Lemma~\ref{lem:dual-space-simulation}, there is a unique linear simulation $\overline{u} : D \rightarrow U$ such that
  $\overline{u} \circ d = u$.
\end{mproofEnd}

If a transition system $T$ is affine, then its determinization can be
computed in polynomial time.  Fixing a basis for the state space $S_T$
(of some dimension $n$), we can represent the transition relation of
$T$ in the form $R_T = \{ \tuple{\vec{x},\vec{x}'} : A\vec{x}' =
B\vec{x} + \vec{c} \}$ where $A, B \in \mathbb{Q}^{m \times n}$ and
$\vec{c} \in \mathbb{Q}^m$ (for some $m$).  We can represent
functionals on $S_T$ by $n$-dimensional vectors, where the vector
$\mathbf{v} \in \mathbb{Q}^n$ corresponds to the functional that maps
$\vec{u} \mapsto \transpose{\mathbf{v}} \vec{u}$.  A linear space of
functionals $\Lambda$ can be represented by a system of linear equations $\Lambda
= \set{ \vec{x} : M\vec{x} = 0 }$.  The $i$th row
$\transpose{\vec{a}_i}\vec{v} = \transpose{\vec{b}_i}\vec{u} + c_i$,
of the system of equations $A\vec{x}' = B\vec{x} + \vec{c}$ can be read
as ``$T$ is $(\set{\transpose{\vec{b}_i}},\set{\transpose{\vec{a}_i}})$-deterministic.''  Thus, the functionals $\transpose{\vec{f}}$ such that $T$ is $(\Lambda,\set{\transpose{\vec{f}}})$-deterministic
are those that can be written as a linear combination
of the rows of $A$ such that the corresponding linear combination of
the rows of $B$ belongs to $\Lambda$; i.e.,
\[ \Det(\{ \tuple{\vec{x},\vec{x}'} : A\vec{x}' = B\vec{x} + \vec{c} \} , \{ \vec{f} : M\vec{f} = 0 \}) = \set{ \vec{d} :  \exists \vec{y}. M\transpose{B}\vec{y} = 0 \land \transpose{A}\vec{y} = \vec{d} }\ . \]
A representation of $\Det(T,\Lambda)$ can be computed in polynomial time
using Gaussian elimination.  Since the lattice of linear subspaces of
$S_T^\star$ has height $n$, the greatest fixpoint of $\Det(T,-)$ can be
computed in polynomial time.

\begin{mexample}
  Continuing the example from Figure~\ref{fig:lds} and Example~\ref{ex:aff}, we consider the determinization of the affine transition system in Eq~(\ref{eq:aff}).
  The rows of the matrix on the left-hand side correspond to generators for $\Det(\aff(F),{\mathbb{Q}^4}^\star)$:
  \begin{align*}
  \Det(\aff(F),{\mathbb{Q}^4}^\star) &= 
  \textit{span}(\{\begin{bmatrix} 1 & 0 & 0 & 0 \end{bmatrix},
  \begin{bmatrix} 0 & 1 & 1 & 0 \end{bmatrix},
  \begin{bmatrix} 0 & 0 & 0 & 1 \end{bmatrix}\})\\
  \Det(\aff(F),\Det(\aff(F),{\mathbb{Q}^4}^\star)) &=
  \textit{span}(\{\begin{bmatrix} 0 & 1 & 1 & 0 \end{bmatrix},
  \begin{bmatrix} 0 & 0 & 0 & 1 \end{bmatrix}\})
  \end{align*}
  which is the greatest fixpoint $\Det(\aff(F))$.
  Intuitively: after one step of $\aff(F)$, the values
  of $w$, $x+y$, and $z$ are affine functions of the input;
  after two steps $x+y$ and $z$ are affine functions of the input but
  $w$ is not, since the value of $w$ on the second step depends upon the value of $x$ in the first, and $x$ is not an affine function of the input.
  
  This yields the deterministic reflection $\tuple{D,d}$ (also pictured in Figure~\ref{fig:lds}) where
  \begin{center}
      $R_D = \set{\tuple{\begin{bmatrix}a\\b\end{bmatrix},\begin{bmatrix}a'\\b'\end{bmatrix}} : \begin{bmatrix}a'\\b'\end{bmatrix} = \begin{bmatrix}
      1 & -1\\
      0 & 1
    \end{bmatrix}
      \begin{bmatrix}a\\b\end{bmatrix}}$
      \hspace*{1cm}and\hspace*{1cm}
      $d = \begin{bmatrix}
    0 & 1 & 1 & 0\\
    0 & 0 & 0 & 1
  \end{bmatrix}$
  \end{center}
\end{mexample}

\subsection{Rational-spectrum reflections of DATS} \label{sec:qdats}

In this section, we define rational-spectrum $\DATS{}$ and show that
every $\DATS{}$ has a rational-spectrum-reflection.

In the following, it is convenient to work with transition systems
that are linear rather than affine.  We will prove that every
deterministic \textit{linear} transition system has a best abstraction
with rational spectrum.  The result extends to the affine case through
the use of \textit{homogenization}: i.e., we embed a (non-empty) affine
transition system into a linear transition system with one additional
dimension, such that if we fix that dimension to be $1$ then we
recover the affine transition system.  If the transition relation of a
$\DATS{}$ is represented in the form $A\vec{x}' = B\vec{x} + \vec{c}$,
then its homogenization is simply
\[ \begin{bmatrix}
  A & 0\\
  0 & 1
\end{bmatrix}
\begin{bmatrix}
  \vec{x}'\\
  y
\end{bmatrix}
=\begin{bmatrix}
  B & \vec{c}\\
  0 & 1
\end{bmatrix}
\begin{bmatrix}
  \vec{x}\\
  y
\end{bmatrix}\ .
\]
For a $\DATS{}$ $T$,  we use $\Homogenize(T)$ to denote the pair $\tuple{L, h}$, consisting the $\DLTS$ $L$ resulting from homogenization and the affine simulation
$h : T \rightarrow L$ that maps each $\vec{x} \in S_T$ to $\begin{bmatrix}
 \vec{x} \\ 1 
\end{bmatrix}$ (i.e., the affine simulation $h$ formalizes the idea that if we fix the extra dimension $y$ to be 1, we recover the original $\DATS$ $T$).

Let $T$ be a deterministic linear transition system.  Since our goal
is to analyze the asymptotic behavior of $T$, and all long-running
behaviors of $T$ reside entirely within $\dom^\omega(T)$, we are
interested in the structure of $\dom^\omega(T)$ and $T$'s behavior on
this set.  First, we observe that $\dom^\omega(T)$ 
is a linear subspace of $S_T$
and is computable.  For any $k$, let $T^k$ denote the
linear transition system whose transition relation is the $k$-fold
composition of the transition relation of $R$.  Consider the
descending sequence of linear spaces
\[ \dom(T) \supseteq \dom(T^2) \supseteq \dom(T^3) \supseteq \dots \]
(i.e., the set of states from which there are $T$ computations of
length 1, length 2, length 3, \dots).  Since the space $S_T$
is finite dimensional, this sequence must stabilize at some $k$.
Since the states in $\dom(T^k)$ have $T$-computations of any length
and $T$ is deterministic, we have that $\dom(T^k)$ is precisely
$\dom^\omega(T)$.

Since $T$ is total on $\dom^\omega(T)$ and the successor of a state in
$\dom^\omega(T)$ must also belong to $\dom^\omega(T)$, $T$ defines a
linear map $T|_\omega: \dom^\omega(T) \rightarrow \dom^\omega(T)$.  In
this way, we can essentially reduce asymptotic analysis of $\DATS{}$
to asymptotic analysis of linear dynamical systems.  The asymptotic analysis
of linear dynamical systems developed in Sections~\ref{sec:dta} and~\ref{sec:summary}
requires rational eigenvalues; thus we are interested in $\DATS{}$ $T$ such that
$T|_\omega$ has rational eigenvalues.  With this in mind, we define
$\spec(T) = \spec(T|_\omega)$, and say that $T$ \textbf{has rational
  spectrum} if $\spec(T) \subseteq \mathbb{Q}$.  Define $\QDLTS{}$ to
be the subcategory of $\DLTS{}$ with rational spectrum, and $\QDATS{}$
to be the subcategory of $\DATS{}$ whose homogenization lies in
$\QDLTS{}$.

\begin{mexample}
 Consider the $\DLTS{}$ $T$ with
 \[
 R_T \defeq \set{\tuple{\begin{bmatrix}x\\y\\z\end{bmatrix},\begin{bmatrix}x'\\y'\\z'\end{bmatrix}} :
 \begin{bmatrix}
 1 & 0 & 0 \\
 0 & 1 & 0 \\
 0 & 0 & 1 \\
 0 & 0 & 0 
 \end{bmatrix}
 \begin{bmatrix}x'\\y'\\z'\end{bmatrix} =
  \begin{bmatrix}
 2 & 0 & 1 \\
 0 & 2 & 2 \\
 0 & 0 & 3 \\
 1 & -1 & 0 \\
 \end{bmatrix}
 \begin{bmatrix}x\\y\\z\end{bmatrix}}
 \]
 The bottom-most equation corresponds to a constraint that only vectors where the $x$ and $y$ coordinates are equal have successors, so we have:
 \[ \dom(T) = \set{\transpose{\begin{bmatrix}x&y&z\end{bmatrix}} : x = y} \]
 Supposing that the $x$ and $y$ coordinates are equal in some pre-state, they are equal in the post-state exactly when $z=0$, so we have
 \[ \dom(T^2) = \set{\transpose{\begin{bmatrix}x & y & z\end{bmatrix}} : x = y \land z = 0} \]
 It is easy to check that $\dom(T^3) = \dom(T^2)$, and therefore
 $\dom^\omega(T) = \dom(T^2)$.  The vector $\transpose{\begin{bmatrix}1 & 1 & 0\end{bmatrix}}$ is a basis
 for $\dom^\omega(T)$, and the matrix representation of $T|_\omega$ with respect to this basis is $\begin{bmatrix} 2 \end{bmatrix}$ (i.e., $\transpose{\begin{bmatrix}1 & 1 & 0\end{bmatrix}} \rightarrow_T \transpose{\begin{bmatrix}2 & 2 & 0\end{bmatrix}}$).  Thus we can see $\spec(T) = \set{2}$, and $T$ is a $\QDLTS{}$.
\end{mexample}

Towards an application of Lemma~\ref{lem:dual-space-simulation},
define the \textbf{generalized rational eigenspace} of a DLTS $T$ to
be
\[
E_{\mathbb{Q}}(T) \defeq \textit{span}\left(\set{ f \in S_T^\star : \exists \lambda \in \mathbb{Q}, \exists r \in \mathbb{N}^+. f \circ (T|_\omega - \underline{\lambda})^r = 0 }\right).
\]

\begin{theoremEnd}{lemma}\label{lem:LTS-factor}
  Let $T$ be a DLTS, and define
  $\tuple{Q,q} \defeq \alpha_{E_\mathbb{Q}(T)}(T)$.  Then for any
  $\QDLTS{}$ $U$ and any simulation $s : T \rightarrow U$, there is a unique simulation $\overline{s} : Q \rightarrow U$ such that
  $\overline{s} \circ q = s$.
\end{theoremEnd}
\begin{mproofEnd}
By Lemma~\ref{lem:dual-space-simulation} it is sufficient to show that for any $f \in S_U^\star$, we have $f \circ s \in E_{\mathbb{Q}}(T)$.   Suppose that $U$ has dimension $n$, and $\dom^\omega(U)$ has dimension $m$.
Since the set of functionals $f$ satisfying $f \circ s \in E_{\mathbb{Q}}(T)$ is linear, it is sufficient to construct a basis
$f_1,\dots,f_n$ for $S_U^\star$ and show that $f_i \circ s \in E_{\mathbb{Q}}(T)$ for each $i$.

    By the spectral theorem \cite[Ch. 6, Theorem 7]{Lax2007} and the assumption that $U$ has all rational spectrum,
  there is a basis for $\dom^\omega(U)^\star$ consisting 
  functionals $h_1,\dots,h_m$ that
  for each $i$ there is some $\lambda_i \in \mathbb{Q}$ and $r_i \geq 1$ such that $h_i \circ (U|_\omega - \underline{\lambda_i})^{r_i} = 0$.  For each $i$, let $f_i \in S_U^\star$ be an (arbitrary) extension of $h_i$ to $S_U$.  Let $f_{m+1},\dots,f_n \in S_U^\star$ be a basis for the space of functionals that vanish on $\dom^\omega(U)$.  Claim that $f_1,\dots,f_n$ is linearly independent (and is therefore a basis for $S_U'$).  Suppose that $a_1f_1 + \dots + a_nf_n = 0$; we must show $a_i$ = 0 for each $i$.
  For all $x \in \dom^\omega(U)$ we must have
  \begin{align*}
      0 &= (a_1f_1 + \dots + a_nf_n)(x)  & \text{Assumption}\\
      &= (a_1f_1 + \dots + a_mf_m)(x) &  f_{m+1},\dots,f_n \text{ vanish on } \dom^\omega(U)\\
      &= (a_1h_m + \dots + a_mh_m)(x) & h_i \text{ and } f_i \text{ coincide on } \dom^\omega(U)
  \end{align*} and thus $a_1 = \dots = a_m = 0$, since $h_1,\dots,h_m$ are linearly independent.  Since $a_{m+1}f_{m+1} + \dots + a_nf_n = 0$  and $f_{m+1}, \dots, f_n$ are linearly independent, we must have $a_{m+1} = \dots = a_n = 0$.

  Since $s : T \rightarrow U$ is a simulation, $s$ must map elements of $\dom^\omega(T)$ to elements of $\dom^\omega(U)$; we use $s|_\omega : \dom^\omega(T) \rightarrow \dom^\omega(U)$ to denote the restriction of $s$ to $\dom^\omega(T)$.  We have that $s|_\omega$ is a simulation from $T|_\omega$ to $U|_\omega$, which can be expressed by the equation $s|_\omega \circ T|_\omega = U|_\omega \circ s|_\omega$.
  From the construction of the basis $f_1,\dots,f_n$, we have that $f_i \circ (U|_\omega - \underline{\lambda_i})^{r_i} = 0$ (for $i > m$, since $f_i$ vanishes on $\dom^\omega(U)$ we may take $\lambda_i \defeq 0$ and $r_i \defeq 1$).  Pre-composing with $s|_\omega$, we have
   $f_i \circ (U|_\omega - \underline{\lambda_i})^{r_i} \circ s|_\omega = 0$, and by induction on $r_i$, we have
  $f_i \circ s|_\omega \circ (T|_\omega - \underline{\lambda_i})^{r_i} = f_i \circ (U|_\omega - \underline{\lambda_i})^{r_i} \circ s|_\omega = 0$.  We conclude that
  $f_i \circ s \circ (T|_\omega - \underline{\lambda_i})^{r_i} = 0$ and thus $f_i \circ s \in E_{\mathbb{Q}}(T)$.
\end{mproofEnd}

While $\alpha_{E_\mathbb{Q}(T)}(T)$ satisfies a universal property for
$\QDLTS{}$, it does not necessary belong to $\QDLTS{}$ itself because
it need not be deterministic.  However, by iterative interleaving of
Lemma~\ref{lem:LTS-factor} and determinization as shown in
Algorithm~\ref{alg:rational-spec-abstraction}, we arrive at a
$\QDLTS{}$-reflection.
Example~\ref{ex:best-qdlts-abstraction} in the appendix demonstrates how 
we calculate a $\QDLTS{}$-reflection of a particular $\DLTS{}$.

\begin{theoremEnd}{theorem}
  For any 
  deterministic linear transition system,
  Algorithm~\ref{alg:rational-spec-abstraction} computes a $\QDLTS{}$-reflection.
  \label{thm:best-qdlts}
\end{theoremEnd}
\begin{mproofEnd}
  Let $T$ be a $\DLTS{}$.
  Clearly, if Algorithm~\ref{alg:rational-spec-abstraction}
  terminates, then it returns a $\QDLTS{}$ abstraction of $T$.  Since
  each iteration of the loop decreases the dimension the state space $S_U$, the algorithm must terminate.  It remains to show that
  the abstraction is a reflection.

  Suppose that there exists a linear simulation $v$ from $T$ to
  an arbitrary $\QDLTS{}$ $V$.  We show the loop maintains the invariant
  that there is a unique simulation $\overline{v} : U \rightarrow V$
  such that $v = \overline{v} \circ s$.
  The invariant trivially holds when entering the loop. To show that it is maintained by the loop, we suppose that there exists a unique simulation $\overline{v} : U \rightarrow V$
  such that $v = \overline{v} \circ s$, and show that there exists a unique simulation $\overline{v}' : U' \rightarrow V$
  such that $v = \overline{v}' \circ d \circ q \circ s$, where
  $\tuple{Q,q} = \alpha_{E_\mathbb{Q}(U)}(U)$, and
  $\tuple{U',d} = \alpha_{\Det(Q)}(Q)$.
  
  Since $\overline{v}$ is a simulation from $U$ to $V$, by Lemma~\ref{lem:LTS-factor}, 
  there exists a unique simulation
  $\overline{q}$ from $Q$ to $V$ such that
  $\overline{v} = \overline{q} \circ q$.  Since $V$ is deterministic and $\tuple{U',d}$ is a deterministic reflection of $Q$, there exists a unique simulation
  $\overline{v}'$ from $U'$ to $V$ such that
  $\overline{q} = \overline{v}' \circ d$.  We may then verify the desired result:
\begin{align*}
     \overline{v}' \circ d \circ q \circ s &= \overline{q} \circ q \circ s
      = \overline{v} \circ s
\end{align*}
     
Finally, observe that $d \circ q \circ s$ is the composition of surjective maps and therefore surjective.  It follows that $\overline{v}'$ is the \textit{unique} solution to the above equation.
\end{mproofEnd}

\begin{algorithm}
\Input{A $\DLTS$ $T$.}
\Output{$\QDLTS{}$-reflection of $T$}
$U \gets T$\;
$s \gets \lambda x. x$ \tcc*{Invariant: $s$ is a simulation from $T$ to $U$}
\While{$\spec(U|_\omega) \nsubseteq \mathbb{Q}$} {
  $\tuple{Q,q} \gets \alpha_{E_\mathbb{Q}(U)}(U)$ \tcc*{Lemma~\ref{lem:LTS-factor}}
  $\tuple{U,d} \gets \alpha_{\Det(Q)}(Q)$ \tcc*{Lemma~\ref{lem:determinization}}
  $s \gets d \circ q \circ s$\;
}
\Return{$\tuple{U,s}$}
\caption{Computation of a $\QDLTS{}$-reflection of a $\DLTS{}$}
\label{alg:rational-spec-abstraction}
\end{algorithm}

Finally, by homogenization and Theorem~\ref{thm:best-qdlts}, we
conclude with the desired result:
\begin{theoremEnd}[category=homogenization]{corollary}
  \label{thm:bestqdats}
  $\QDATS{}$ is a reflective subcategory of $\DATS{}$.
\end{theoremEnd}
\begin{mproofEnd}
  Let $T$ be a deterministic affine transition system.  If $R_T$ is
  empty, then it is a $\QDATS{}$ and we are done.  Suppose $R_T$ is non-empty.
  By
  Lemma~\ref{thm:best-qdlts}, $\Hom(T)$ has a $\QDLTS{}$-reflection, say $\tuple{\hat{T},s}$.  The intuition behind the argument is that we may ``dehomogenize'' $\hat{T}$ and $s$ to obtain a $\QDATS{}$-reflection $\tuple{Z,z}$ of $T$.
 
  In the following, for any product space $A \times B$, we use $\pi_1$ and $\pi_2$ to denote the first and second projection ($\pi_1(a,b) \defeq a$, $\pi_2(a,b) \defeq b$).
  Let $K$ be the linear transition system where $S_K = \mathbb{Q}$ and
  $R_K = \set{ \tuple{s,s} : s \in \mathbb{Q} }$; intuitively $K$
  captures the dynamics of the constant dimension introduced by the
  homogenization process.  Clearly
  $\sim{\pi_2}{\Hom(T)}{K}$ is a simulation, and $K$ has rational
  spectrum.  Since $\tuple{\hat{T},s}$ is a $\QDLTS{}$-reflection of
  $\Hom(T)$, there is a unique
  simulation $\sim{\overline{\pi_2}}{\hat{T}}{K}$ such that
  $\overline{\pi_2} \circ s = \pi_2$.  Define $Z$ to be the affine transition system where
  \begin{align*}
    S_Z &= \{ x \in S_{\hat{T}} : \overline{\pi_2}(x) = 0 \}\\
    R_Z &= \{ \tuple{x,x'} : \tuple{x + s(0,1), x' + s(0,1)} \in R_{\hat{T}} \}
  \end{align*}
  Observe that for any $x \in S_T$, we have $\overline{\pi_2}(s(x,0)) =
  \pi_2(x,0) = 0$, and so the function $z$ defined by $z(x) \defeq
  s(x,0)$ maps $S_T$ to $S_Z$.  In fact, $z$ is a simulation from $T$
  to $Z$:
  \begin{align*}
    \tr{T}{x}{x'} &\Rightarrow \tr{\Hom(T)}{\tuple{x,1}}{\tuple{x',1}} & \text{Def'n of $\Hom$}\\
    &\Rightarrow \tr{\hat{T}}{s(x,1)}{s(x',1)} & \text{$s$ is a simulation}\\
    &\Rightarrow \tr{\hat{T}}{(s(x,0)+s(0,1))}{(s(x',0)+s(0,1))} & \text{Linearity}\\
    &\Rightarrow \tr{Z}{s(x,0)}{s(x',0)} & \text{Def'n of $R_Z$}\\
    &\Rightarrow \tr{Z}{z(x)}{z(x')} & \text{Def'n of $z$}
  \end{align*}

  Next we show that $\Hom(Z)$ is isomorphic to $\hat{T}$, and therefore has rational spectrum.  Define a bijective linear
  map $e : S_{\Hom(Z)} \rightarrow S_{\hat{T}}$ by $e(x,c) \defeq x +
  cs(0,1)$.  We now show that $\tr{\Hom(Z)}{(x,c)}{(x',c)} \iff \tr{\hat{T}}{e(x,c)}{e(x',c)}$.  Let $x_0,x_0'$ be such that $\tr{Z}{x_0}{x_0'}$.  Unfolding the definitions of $\Hom(Z)$ and $Z$, we have
  {\footnotesize\begin{align}
      \tr{\Hom(Z)}{(x,c)}{(x',c)} &\iff \tr{Z}{(x + (1-c)x_0)}{(x' +
    (1-c)x_0')}\\
    &\iff  \tr{\hat{T}}{(x + (1-c)x_0 + s(0,1))}{(x' + (1-c)x_0' + s(0,1))} \label{eq:tr1}
  \end{align}}
  Since by assumption $\tr{Z}{x_0}{x_0'}$, we have
  \begin{equation} \label{eq:tr2}
    \tr{{\hat{T}}}{(x_0 + s(0,1))}{(x_0' + s(0,1))}\ .
  \end{equation}
  Since $R_{\hat{T}}$ is closed under linear combinations, we may add
  the transition in Eq.~(\ref{eq:tr1}) to the transition
  Eq.~(\ref{eq:tr2}) scaled by a factor of $(c-1)$ to see that
  \[ \tr{\Hom(Z)}{(x,c)}{(x',c)} \iff \tr{\hat{T}}{x + cs(0,1)}{x'+cs(0,1)} \iff \tr{\hat{T}}{e(x,c)}{e(x',c)}\ . \]

  Finally, we show that $\tuple{Z,z}$ is a $\QDATS{}$-reflection of $T$.
  Suppose that $Y \in \QDATS{}$ and $\sim{y}{T}{Y}$ is a simulation.
  Then we have a simulation $\sim{\Hom(y)}{\Hom(T)}{\Hom(Y)}$, and
  since $\tuple{\hat{T},s}$ is a $\QDLTS$-reflection of $\Hom(T)$
  there is a unique simulation
  $\sim{\overline{\Hom(y)}}{\hat{T}}{\Hom(Y)}$ such that
  $\overline{\Hom(y)} \circ s = \Hom(y)$.  Define a function
  $\overline{y} : S_Z \rightarrow S_Y$ by $\overline{y} \defeq \pi_1
  \circ \overline{\Hom(y)}$.  We now show that $y = \overline{y} \circ
  z$ and that $\overline{y}$ is a simulation from $Z$ to $Y$
  (uniqueness follows from the fact that $z$ is surjective).

  First, we show that $y = \overline{y} \circ z$.  Observe that for any $x \in S_T$, we have
  \begin{align*}
    \overline{y}(z(x)) & = \pi_1(\overline{\Hom(y)}(s(x))) & \text{Def'n of $\overline{y}$ and $z$}\\
    &= \pi_1(\Hom(y)(x,0)) & \Hom(y) = \overline{\Hom(y)} \circ s\\
    &= \pi_1(y(x),0) & \text{Def'n of $\Hom(y)$}\\
    &= y(x)
  \end{align*}
  
  Second, we show that $\overline{y}$ is a simulation from $Z$ to $Y$.
  Claim that for any $x \in S_Z$, we have
  $\overline{\Hom(y)}(x + s(0,1)) = (\overline{y}(x),1)$.
  The fact that $\overline{y}$ is
  a simulation follows:
  \begin{align*}
    \tr{Z}{x}{x'}& \Rightarrow \tr{\hat{T}}{(x + s(0,1))}{(x' + s(0,1))} & \text{Def'n of $Z$}\\
    &\Rightarrow \tr{\Hom(Y)}{\overline{\Hom(y)}(x + s(0,1))}{\overline{\Hom(y)}(x' + s(0,1))} & \text{$\overline{\Hom(y)}$ is a simulation}\\
    &\Rightarrow \tr{\Hom(Y)}{(\overline{y}(x), 1)}{(\overline{y}(x'),1)} & \text{Claim}\\
    &\Rightarrow \tr{Y}{\overline{y}(x)}{\overline{y}(x')} & \text{Def'n of $\Hom(Y)$}
  \end{align*}
  
  It remains to prove the claim.  Observe that
  \begin{align*}
    \overline{\Hom(y)}(x + s(0,1)) &= \overline{\Hom(y)}(x) + \overline{\Hom(y)}(s(0,1)) & \text{Linearity}\\
    &=\overline{\Hom(y)}(x) + \Hom(y)(0,1) & \Hom(y) = \overline{\Hom(y)} \circ s\\
    &=\overline{\Hom(y)}(x) + (y(0),1) & \text{Def'n of $\Hom(y)$}\\
    &=\overline{\Hom(y)}(x) + (0,1) & \text{Linearity}\\
    &=(\pi_1(\overline{\Hom(y)}(x)), \pi_2(\overline{\Hom(y)}(x)) + 1)\\
    &=(\overline{y}(x), \pi_2(\overline{\Hom(y)}(x)) + 1) & \text{Def'n of $\overline{y}$}
  \end{align*}
  so it is sufficient to show that $\pi_2(\overline{\Hom(y)}(x)) = 0$.
  Since $s$ is
  surjective, there is some $\tuple{a,b} \in S_{\Hom(T)} = S_T \times \mathbb{Q}$ with $s(a,b) = x$.
  Finally, we may verify
  \begin{align*}
    \pi_2(\overline{\Hom(y)}(x)) &= \pi_2(\overline{\Hom(y)}(s(a,b))) & \text{Def'n of $a,b$}\\
    &= \pi_2(\Hom(y)(a,b))  & \Hom(y) = \overline{\Hom(y)} \circ s \\
    &= \pi_2(y(a),b) & \text{Def'n of $\Hom(y)$}\\
    &= \pi_2(a,b)\\
    &= \overline{\pi_2}(s(a,b)) & \pi_2 = \overline{\pi_2} \circ s\\
    &= \overline{\pi_2}(x) & \text{Def'n of $a,b$}
  \end{align*}
  and since $x \in Z$ we have $\overline{\pi_2}(x) = 0$.
\end{mproofEnd}

\section{Asymptotic Analysis of Linear Dynamical Systems} \label{sec:dta}

 This section is concerned with analyzing the behavior of loops of the form
 \[ \textbf{while } (G(\vec{x})) \textbf{ do } \vec{x} := A\vec{x} \ , \]
 where the $G(\vec{x})$ is an LIA formula and $A$ is a matrix with integer spectrum.
 Our goal is to capture the asymptotic behavior of iterating the map $A$ on an initial state $\vec{x}_0$ with respect to the formula $G$.
 Specifically, we show that
\begin{theorem}
  \label{thm:periodic-truth-value-of-guard-under-linear-map}
  For any LIA formula $G$ and any matrix $A$ with integer spectrum, there is a periodic sequence of LIA formulas $H_0, H_1,H_2,\dots$ such that for any initial state $\vec{x}_0 \in \mathbb{Q}^n$,
  there exists $K$ such that for any $k > K$,
  $G(A^k \vec{x}_0)$ holds if and only if $H_k(\vec{x}_0)$ does.
\end{theorem}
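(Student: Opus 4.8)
The plan is to reduce the eventual truth value of $G$ along the orbit $\vec{x}_0, A\vec{x}_0, A^2\vec{x}_0,\dots$ to the eventual truth values of its atomic subformulas.  We may assume $G$ is quantifier-free (applying quantifier elimination for Presburger arithmetic), so that $G$ is a Boolean combination of atoms of the form $t_1 \le t_2$ and $n \mid t$, where each $t$ is a linear term in the coordinates of $\vec{x}$.  The goal is to show that, for every atom $\phi$, the map $k \mapsto \phi(A^k\vec{x}_0)$ is \emph{eventually periodic} in $k$ with a period that does not depend on $\vec{x}_0$, and whose value on each residue class is expressed by an LIA constraint $\Psi_{\phi,r}$ on $\vec{x}_0$.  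Since eventual periodicity with a uniform period is preserved by Boolean combinations, it then suffices to take $P$ to be the least common multiple of the atomic periods and, for each residue $r \bmod P$, let $H_r$ be obtained from $G$ by replacing each atom $\phi$ with $\Psi_{\phi,r}$; the periodic sequence $H_k \defeq H_{k \bmod P}$ is the desired one, with $K$ (which may depend on $\vec{x}_0$, as the statement allows) chosen past the point where every atom has settled into its periodic regime.

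The first ingredient is a closed form for matrix powers.  Since $A$ has integer, hence rational, spectrum, the spectral theorem gives a Jordan form for $A$ over $\mathbb{Q}$, and hence finitely many computable rational matrices $N_{\lambda,j}$ (indexed by $\lambda \in \spec(A)$ and $0 \le j < n$) with $A^k = \sum_\lambda \lambda^k \sum_j k^j N_{\lambda,j}$ for all sufficiently large $k$ (the restriction to large $k$ is needed only to absorb the nilpotent contribution of the eigenvalue $0$).  Consequently, for any row vector $\transpose{\vec{c}}$ and constant $d$, the quantity $\transpose{\vec{c}}A^k\vec{x}_0 - d$ equals $\sum_\lambda \lambda^k\, q_\lambda(k,\vec{x}_0)$, where $q_\lambda(k,\vec{x}_0) = \sum_j k^j \langle\vec{a}_{\lambda,j},\vec{x}_0\rangle$ is a polynomial in $k$ whose coefficients are linear functionals of $\vec{x}_0$ (with rational coefficients computable from $A$, $\vec{c}$, and $d$).

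For an atomic inequality $\transpose{\vec{c}}\vec{x} \le d$, I would analyze $e_k \defeq \transpose{\vec{c}}A^k\vec{x}_0 - d$ separately on each residue class $k \equiv r \pmod{2}$: substituting $k = 2m+r$ rewrites each $\lambda^k$ as $\lambda^r(\lambda^2)^m$, so $e_{2m+r} = \sum_\nu \nu^m\, \tilde{q}_{\nu,r}(m,\vec{x}_0)$, where $\nu$ ranges over the \emph{non-negative} integers $\{\lambda^2 : \lambda \in \spec(A)\}$ and each $\tilde{q}_{\nu,r}$ is again a polynomial in $m$ with coefficients linear in $\vec{x}_0$.  Because all bases $\nu$ are non-negative, for large $m$ the sign of $e_{2m+r}$ is governed by the term with the largest $\nu \ge 1$ whose polynomial $\tilde{q}_{\nu,r}(\cdot,\vec{x}_0)$ is not identically zero, and equals the sign of that polynomial's leading coefficient (and $e_{2m+r} = 0$ for large $m$ if all such polynomials vanish).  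Although which $\nu$ dominates, and which coefficient is leading, both depend on $\vec{x}_0$, there are only finitely many possibilities (bounded by $|\spec(A)|$ and $n$), and each leading coefficient is a linear functional of $\vec{x}_0$; hence the eventual truth value of the atom on the class $k \equiv r \pmod{2}$ is a quantifier-free Boolean combination of linear inequalities and equalities in $\vec{x}_0$, i.e., an LIA formula $\Psi_{\phi,r}$.  For a divisibility atom $n \mid \transpose{\vec{c}}\vec{x}$, I would instead reduce modulo $n$: after clearing denominators, $\transpose{\vec{c}}A^k\vec{x}_0$ becomes an integer-coefficient linear combination of the coordinates of $\vec{x}_0$ whose coefficients are built from $\transpose{\vec{c}}$ and the integers $k^j\lambda^k$; since $k^j \bmod n$ has period $n$ and $\lambda^k \bmod n$ is eventually periodic in $k$, these coefficients modulo $n$ are eventually periodic in $k$ with a period depending only on $n$ and $\spec(A)$, and on each residue class the atom reduces to a fixed divisibility constraint $\Psi_{\phi,r}$ on $\vec{x}_0$.

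I expect the crux to be the asymptotic sign analysis for the inequality atoms.  The two key observations are that passing to residue classes modulo $2$ makes every exponential base non-negative (via $\lambda^{2m+r} = \lambda^r(\lambda^2)^m$), reducing the sign question to the dominant-term and leading-coefficient analysis of a finite sum of polynomials against non-negative integer bases; and that, although ``which eigenvalue dominates'' and ``which coefficient is leading'' vary with $\vec{x}_0$, the case split over these options is finite and each resulting condition is linear in $\vec{x}_0$, so it remains LIA-definable.  The divisibility atoms rely only on the routine eventual periodicity of $k^j\lambda^k$ modulo $n$, and the remaining work --- choosing the uniform period $P$, assembling $H_r$ from $G$, and choosing $K$ per $\vec{x}_0$ --- is straightforward bookkeeping.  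Computability of the $H_k$ follows since $\spec(A)$, the matrix-power decomposition, and all the linear functionals involved are computable.
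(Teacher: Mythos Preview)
Your proposal is correct and follows essentially the same construction as the paper's Section~\ref{sec:dta}: quantifier elimination, pointwise recursion over Boolean connectives, a closed form for $\transpose{\vec{c}}A^k\vec{x}$ from the spectral decomposition, an even/odd split so all exponential bases become non-negative and a dominant-term case analysis handles inequality atoms, and ultimate periodicity of $\lambda^k k^j$ modulo the divisor for divisibility atoms. One small detail to tighten: after clearing the denominator $Q$ from the closed form, the divisibility atom $n \mid t$ becomes $Qn \mid Qt$, so the periodicity analysis should be carried out modulo $Qn$ rather than $n$ (as the paper does); similarly, the additive constant $d$ in an inequality atom is not a linear functional of $\vec{x}_0$, so your $q_\lambda$ should be allowed affine, which the paper handles by carrying $dQ$ as a separate constant summand in the $\DTA$ input.
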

Recall that an infinite sequence $H_0, H_1, H_2, \ldots$ is \textit{periodic} if
it is of the form
\[ (H_0, H_1, \dots, H_P)^\omega \defeq H_0, H_1, \dots, H_P, H_0, H_1, \dots, H_P, \dots \]
We call the periodic sequence $(H_0, H_1, \dots, H_P)^\omega$ the \textit{characteristic sequence} of 
the guard formula $G$ with respect to dynamics matrix $A$, and denote it by $\chi(G, A)$.  Note that $G(A^k\vec{x}_0)$ holds for all but finitely many $k$ exactly when $\bigwedge_{i=0}^P H_i(\vec{x}_0)$ holds.

In the remainder of this section, we show how to compute characteristic sequences.
Let $G$ be an LIA formula and let $A$ be a matrix with integer spectrum. To begin, we compute a quantifier-formula $G'$ that is equivalent to $G$ (using, for example, Cooper's algorithm~\cite{Cooper1972}).  We define $\chi(G',A)$ by recursion on the structure of $G'$.  For the logical connectives $\land$, $\lor$, and $\lnot$, characteristic sequences are defined pointwise:
\begin{align*}
 \chi(\neg H, A) &\defeq (\neg(\chi(H, A)_0), \neg(\chi(H, A)_1), \ldots)\\
 \chi(H_1 \land H_2, A) &\defeq (\chi(H_1, A)_0 \land \chi(H_2, A)_0, \chi(H_1, A)_1 \land \chi(H_2, A)_1, \ldots)\\
  \chi(H_1 \lor H_2, A) &\defeq (\chi(H_1, A)_0 \lor \chi(H_2, A)_0, \chi(H_1, A)_1 \lor \chi(H_2, A)_1, \ldots)
\end{align*}
It remains to show how $\chi$ acts on atomic formulas, which take the form of inequalities $t_1 \leq t_2$ and divisibility constraints $n \mid t$.
An important fact that we employ in both cases is that for any linear term $\transpose{\vec{c}}\vec{x}$ over the variables $\vec{x}$, we can compute a closed form for $\transpose{\vec{c}}A^k(\vec{x})$ by symbolically exponentiating $A$.
   Since (by assumption) $A$ has integer eigenvalues, this closed form has the form $\frac{1}{Q}(p(\vec{x},k))$ where $Q \in \mathbb{N}$ and $p$ is an \textbf{integer exponential-polynomial term}, which takes the form
\begin{equation} \label{eq:exp-poly}
\lambda_1^kk^{d_1}\transpose{\vec{a}_1}\vec{x} + \dotsi + \lambda_m^kk^{d_m}\transpose{\vec{a}_m}\vec{x}
\end{equation}
where $\lambda_i \in \spec(A)$, $d_i \in \mathbb{N}$, and $\vec{a}_i \in \mathbb{Z}^n$.\footnote{Technically, we have
$\frac{1}{Q}(\lambda_1^kk^{d_1}\transpose{\vec{a}_1} + \dotsi + \lambda_m^kk^{d_m}\transpose{\vec{a}_m}) = \transpose{\vec{c}}A^{k}\vec{x}$ for all $k$ greater than rank of the highest-rank generalized eigenvector of 0, but since we are only interested in the asymptotic behavior of $A$ we can disregard the first steps of the computation.}

\subsubsection{Characteristic sequences for inequalities}
 \label{sec:dominant-term-analysis}

 Our method for computing characteristic sequences for inequalities is a variation of Tiwari's method for deciding termination of linear loops with real eigenvalues \cite{CAV:Tiwari2004}.
 
 First, suppose that $\vec{p}(\vec{x},k)$ is an integer exponential-polynomial of the form in Eq.~(\ref{eq:exp-poly}) such that each $\lambda_i$ is a \textit{positive} integer. Further suppose that the summands are
ordered by asymptotic growth, with the dominant term appearing
earliest in the list; i.e., for $i < j$ we have either $\lambda_i >
\lambda_j$, or $\lambda_i = \lambda_j$ and $d_i > d_j$.  If we imagine that
the variables $\vec{x}$ are fixed to some $\vec{x}_0 \in \mathbb{Z}^n$,
then we see that 
$p(\vec{x}_0,k)$ is either identically zero or has finitely many zeros, and therefore its sign is eventually stable.  Furthermore, the sign of $p(\vec{x}_0,k)$ as $k$ tends to $\infty$ is simply the sign of its \textit{dominant term} -- that is, the sign of $\transpose{\vec{a}_i}\vec{x}_0$ for the least $i$ such that $\transpose{\vec{a}_i}\vec{x}_0$ is non-zero.  Thus, we may define a function
 $\DTA$ that maps any exponential-polynomial term $p(\vec{x},k)$ (with positive integral $\lambda_i$) to an LIA formula such that for any $\vec{x}_0 \in \mathbb{Z}^n$,
$\vec{x}_0 \models \DTA(p)$ holds if and only if
$\vec{p}(\vec{x}_0,k)$ is eventually non-negative ($\vec{p}(\vec{x}_0,k) \geq 0$ for all but finitely many $k \in \mathbb{N}$). $\DTA$ is defined as follows:
\begin{align*}
  \DTA(0) & \defeq \true\\
  \DTA(\lambda^kk^d\transpose{\vec{a}}\vec{x} + p) &\defeq \transpose{\vec{a}}\vec{x} \geq 1 \lor \left(\transpose{\vec{a}}\vec{x} = 0 \land \DTA(p)\right)
\end{align*}

Finally, we define the characteristic sequence of an inequality atom as follows.
An inequality $t_1 \leq t_2$ over the variables $\vec{x}$ can be written as $\transpose{\vec{c}}\vec{x} + d \geq 0$
   for $\vec{c} \in \mathbb{Z}^n$ and $d \in \mathbb{Z}$.  Let 
   $\frac{1}{Q_{\textit{even}}}p_{\textit{even}}(\vec{x},k)$ and $\frac{1}{Q_{\textit{odd}}}p_{\textit{odd}}(\vec{x},k)$ be the closed forms of
   $\transpose{\vec{c}}A^{2k}(\vec{x})$
   and
   $\transpose{\vec{c}}A^{2k + 1}(\vec{x})$, respectively; by 
   splitting into ``even'' and ``odd'' cases, we ensure that the exponential-polynomial terms 
   $p_{\textit{even}}(\vec{x},k)$ and $p_{\textit{odd}}(\vec{x},k)$ have only \textit{positive} $\lambda_i$
   and thus are amenable to the dominant term analysis $\DTA$ described above.
   Then we define:
\begin{equation*}
  \chi \left( \transpose{\vec{c}}\vec{x} + d \geq 0, A \right) \defeq \left( \DTA(p_{\textit{even}}(\vec{x},k) + dQ_{\textit{even}}), 
  \DTA( p_{\textit{odd}}(\vec{x},k)+ dQ_{\textit{odd}}) \right)^\omega
\end{equation*}

\begin{mexample}
Consider the matrix $A$ and its exponential $A^k$ below:
\[
 A\left(\begin{bmatrix}
 x' \\ y' \\ z' \\ a' \\ b'
 \end{bmatrix}\right)=
 \begin{bmatrix}
 1 & 1 & 0 & 0 & 0\\
 0 & 1 & 1 & 0 & 0\\
 0 & 0 & 1 & 0 & 0\\
 0 & 0 & 0 & -3 & 0 \\
 0 & 0 & 0 & 0 & 2 \\
 \end{bmatrix}
 \begin{bmatrix}
 x \\ y \\ z \\ a \\ b
 \end{bmatrix}
 \]
 \[
  A^k\left(\begin{bmatrix}x\\y\\z\\a\\b\end{bmatrix}\right) =
 \begin{bmatrix}
 1 & k & \frac{k(k-1)}{2} & 0 & 0 \\
 0 & 1 & k & 0 & 0\\
 0 & 0 & 1 & 0 & 0\\
 0 & 0 & 0 & (-3)^k & 0 \\
 0 & 0 & 0 & 0 & 2^k
 \end{bmatrix}
 \begin{bmatrix}
 x \\ y \\ z \\ a \\ b
 \end{bmatrix}
 =
 \begin{bmatrix}
 \frac{1}{2}(zk^2 + (2y - z)k + 2x)\\
 zk + y\\
 z \\
 (-3)^k a\\
 2^k b
 \end{bmatrix}
 \]
First we compute the characteristic sequence $\chi(x \geq 0,A)$.  Applying  the dominant term analysis of the closed form of $x$ yields
\[
  \DTA\left( z k^2 + \left( 2y - z\right) k + x \right) = \left(\begin{array}{ll}&  z > 0  \\ \lor &(z = 0 \land 2y - z > 0) \\ \lor &(z = 0 \land 2y - z = 0 \land x \geq 0)\end{array} \right) \ ,
  \]
  Since the closed form involves only positive exponential terms, we need not split into an even and odd case, and we simply have:
  \[
  \chi(x \geq 0, A) =  (z > 0  \lor (z = 0 \land 2y - z > 0)  \lor (z = 0 \land 2y - z = 0 \land x \geq 0))^\omega \]

Next we compute the characteristic sequence $\chi(a-b \geq 0, A)$, which does require a case split.  Applying dominant term analysis of the closed form of $(a-b)$ yields
\begin{align*}
  \DTA( a \cdot (-3)^{2k} - b \cdot 2^{2k}) &= a > 0 \lor (a=0 \land -b \geq 0) \\
  \DTA( a \cdot (-3)^{2k+1} - b \cdot 2^{2k+1}) &= -a > 0 \lor (-a=0 \land -b \geq 0) \ .
\end{align*}
and thus we have
\[
    \chi(a - b \geq 0, A) = (a > 0 \lor (a=0 \land -b \geq 0), -a > 0 \lor (-a=0 \land -b \geq 0))^\omega \ . \qedhere
\]
\end{mexample}

\subsubsection{Characteristic sequences for divisibility atoms} \label{sec:dta-for-divisibility-atoms}

Last we show how to define $\chi$ for divisibility atoms $n \mid t$.  Write the term $t$ as $\transpose{\vec{c}}\vec{x}+d$ and let the closed form of $\transpose{\vec{c}}A^k(\vec{x})$ be
\[ \frac{1}{Q}(\lambda_1^kk^{d_1}\transpose{\vec{a}_1}\vec{x} + \dotsi + \lambda_m^kk^{d_m}\transpose{\vec{a}_m}\vec{x})\ .\]
The formula $n \mid \transpose{\vec{c}}A^k(\vec{x})+d$ is equivalent to
$Qn \mid \lambda_1^kk^{d_1} \transpose{\vec{a}_1}\vec{x} + \dotsi + \lambda_m^kk^{d_m}\transpose{\vec{a}_m}\vec{x} + Qd$.  For any $i$, the sequence
$\langle\lambda_i^kk^{d_i} \bmod Qn\rangle_{k=0}^\infty$ is ultimately
periodic, since (1) $\langle k \bmod Qn \rangle_{k=0}^\infty = (0, 1, \dots, Qn -1)^\omega$, (2)
$\langle \lambda_i^k \mod Qn \rangle_{k=0}^\infty$ is ultimately periodic (with period and transient length bounded above by $Qn$)\footnote{An infinite sequence $s_0, s_1, s_2, \ldots$ is \textit{ultimately periodic}, if there exists $N$ such that $s_{N}, s_{N+1}, s_{N+2}, \ldots $ is a periodic sequence. We call $N$ the transient length of this sequence. }, and (3) ultimately periodic sequences are closed under pointwise product.  It follows that for each $i$, there is a periodic sequence of integers $\tuple{z_{i,k}}_{k=0}^\infty$ that agrees with $\langle\lambda_i^kk^{d_i} \bmod Qn\rangle_{k=0}^\infty$ on all but finitely many terms.  Finally, we take
\[ \chi(n \mid t,A) \defeq \langle Qn \mid z_{1,k}\transpose{\vec{a}_1}\vec{x} + \dots + z_{m,k}\transpose{\vec{a}_m}\vec{x} + Qd \rangle_{k=0}^\infty\ . \]

\begin{mexample}
Consider matrix $A$ and the closed form of its exponents below
\[
 A \left(\begin{bmatrix}
 x \\ y \\ z
 \end{bmatrix}\right)=
 \begin{bmatrix}
 1 & 1 & 0 \\
 0 & 1 & 0 \\
 0 & 0 & 5 \\
 \end{bmatrix}
 \begin{bmatrix}
 x \\ y \\ z
 \end{bmatrix}
\hspace*{1cm}
A^k\left(
 \begin{bmatrix}
 x \\ y \\ z
 \end{bmatrix} \right) =
 \begin{bmatrix}
 1 & k & 0 \\
 0 & 1 & 0 \\
 0 & 0 & 5^k \\
 \end{bmatrix}
 \begin{bmatrix}
 x \\ y \\ z
 \end{bmatrix}
\]
We show the characteristic sequences for some divisibility atoms w.r.t $A$:
\begin{align*}
    \chi(3 \mid x, A) &= ( 3 \mid x, 3 \mid x+y, 3 \mid x + 2 y)^\omega \\
     \chi(3 \mid x+2, A) &= ( 3 \mid x+2, 3 \mid x+y+2, 3 \mid x + 2 y+2)^\omega \\
      \chi(3 \mid z, A) &= ( 3 \mid z, 3 \mid 2 z)^\omega  \hspace*{3.5cm} \qedhere
\end{align*}
\end{mexample}

\section{A conditional termination analysis for programs} \label{sec:summary}
This section demonstrates how the results from Sections~\ref{sec:linear-abs}
and~\ref{sec:dta} can be combined to yield a conditional
termination analysis that applies to general programs.

\subsubsection{Integer-spectrum restriction for $\QDLTS{}$} \label{sec:z-restriction-of-qdlts}
Section~\ref{sec:linear-abs} gives a way to compute a $\QDATS{}$-reflection of any transition formula.
Yet the analysis we developed in Section~\ref{sec:dta} 
only applies to linear dynamical systems with integer spectrum.
We now show how to bridge the gap.
Let $V$ be a $\QDATS{}$.  As discussed in Section~\ref{sec:qdats}, we may homogenize $V$ to obtain a $\QDLTS$ $T$.
Define $\Int(T)$ to be the space spanned by the generalized (right) eigenvectors of $T|_\omega$ that correspond to integer eigenvalues:
\[
    \Int(T) \defeq \textit{span}(\{ x \in \dom^\omega(T) : \exists r \in 
    \mathbb{N}^+, \lambda \in \mathbb{Z}. (T|_\omega - \underline{\lambda})^r(x) = 0  \})
\]

Since $\Int(T)$ is invariant under $T|_\omega$ and thus $T$, $T$ defines a linear map $\IntR{T} : \Int(T) \rightarrow \Int(T)$, and by construction $\IntR{T}$ has integer spectrum.  The following lemma justifies the restriction of our attention to the subspace $\Int(T)$.

\begin{theoremEnd}{lemma}
\label{lem:integer-spec-leads-to-soundess}
 Let $F$ be a transition formula, let $\tuple{V, s}$ be a $\QDATS{}$-reflection of $F$, and 
  let $\tuple{T,h} = \Homogenize(V)$.
 For any state $v \in \dom^\omega(F)$, we have $h(s(v)) \in \Int(T)$.
\end{theoremEnd}
\begin{mproofEnd}
Define $t \defeq h \circ s$; observe that $t$ is a simulation from $F$ to $T$.

By the spectral theorem, there is a basis $\vec{b}_1,\dots,\vec{b}_n$ for $\dom^\omega(T)$ such that for each $i$ there is some $\lambda_i \in \mathbb{Q}$ and some $r_i \in \mathbb{N}^{\geq 1}$ such that 
$(T|_\omega - \underline{\lambda_i})^{r_i}(\vec{b}_i) = 0$.
We now argue that without loss of generality, we can further suppose that for all $v \in \dom^\omega(F)$ such that $t(v) \in \dom^\omega(T)$, there exist \textit{integers} $a_1,\dots,a_n$ such that $t(v) = a_1\vec{b}_1 + \dots + a_n\vec{b}_n$.  First, extend the basis $\vec{b}_1,\dots,\vec{b}_n$ to a basis $\vec{b}_1,\dots,\vec{b}_m$ for the whole space $S_T$.  Let $\set{v_x}_{x \in X}$ denote the standard basis for $S_F = X \rightarrow \mathbb{Q}$, where where $v_x$ maps $x$ to 1 and all variables other than $x$ to $0$.  Since $\vec{b}_1,\dots,\vec{b}_m$ are rational vectors, there exist rationals $z_1,\dots,z_m$ such that for each $v_j$, $t(v_j)$ is an integral linear combination of $z_1\vec{b}_1,\dots,z_m\vec{b}_m$; $z_1\vec{b}_1,\dots,z_m\vec{b}_n$ is a basis for $\dom^\omega(T)$ with the desired property.  To verify that for each $v \in \dom^\omega(F)$, $t(v)$ is an integral combination of $z_1\vec{b}_1,\dots,z_m\vec{b}_m$, observe that
since $v \in \dom^\omega(F)$, $v$ is an integral valuation and so $v$ is an integral combination of $\set{v_x}_{x \in X}$, and therefore $t(v)$ is an integral combination of $\vec{b}_1,\dots,\vec{b}_m$.  Since $t(v) \in \dom^\omega(T)$, the coefficients of $\vec{b}_{n+1},\dots,\vec{b}_m$ are zero.

    Since for each $i$ we have
    $(T|_\omega - \underline{\lambda_i})^{r_i}(\vec{b}_i) = 0$, we have for any $k$ and any $i$ such that $\lambda_i \neq 0$,
    $T|_\omega^k(\vec{b}_i) = \lambda_i^k\vec{b}_i + \vec{v}$, where $\vec{v}$ satisfies 
    $(T|_\omega - \underline{\lambda_i})^{r_i-1}(\vec{v}) = 0$ (i.e., $\vec{v}$ is a generalized eigenvector of lower rank).  This can be shown by induction on $k$.  The base case $k=0$ is trivial.  For the inductive step, suppose
    $T|_\omega^k(\vec{b}_i) = \lambda_i^k\vec{b}_i + \vec{v}$
    with $(T|_\omega - \underline{\lambda_i})^{r_i-1}(\vec{v}) = 0$.
    Then
    \begin{align*}
        T|_\omega^{k+1}(\vec{b}_i) & = T|_\omega(\lambda_i^k\vec{b}_i + \vec{v})\\
        & = \lambda_i^k(T|_\omega(\vec{b}_i)) + T|_\omega(\vec{v})\\
        &= \lambda_i^{k+1}\vec{b}_i + (T|_\omega(\vec{v}) + \lambda_i^{k}(T|_\omega(\vec{b}_i)) -  \lambda_i^{k+1}\vec{b}_i)\\
        &=  \lambda_i^{k+1}\vec{b}_i + (T|_\omega(\vec{v}) + \lambda_i^{k}(T|_\omega - \underline{\lambda_i})(\vec{b}_i))
    \end{align*}
    and we may verify
    \[
    \begin{array}{ll}
  &    (T|_\omega - \underline{\lambda_i})^{r_i-1}\left(T|_\omega(\vec{v}) + \lambda_i^{k}(T|_\omega - \underline{\lambda_i})(\vec{b}_i)\right)\\
  =&T|_\omega((T|_\omega - \underline{\lambda_i})^{r_i-1}(\vec{v})) + (T|_\omega - \underline{\lambda_i})^{r_i}(\vec{b}_i)\\
  =& 0.
    \end{array}
    \]
    
Suppose $v_0 \in \dom^\omega(F)$.  Then there exists an infinite trajectory
  \[ v_0 \rightarrow_F v_1 \rightarrow_F v_2 \rightarrow_F \dots \]
   and so there is an infinite trajectory
   \[ t(v_0) \rightarrow_T t(v_1) \rightarrow_T t(v_2) \rightarrow_T \dots \]
   and so $t(v_0) \in \dom^\omega(T)$.
  We want to prove $t(v_0) \in \Int(T)$.

  Since for all $k$ we have $t(v_k) \in \dom^\omega(T)$, there exists integers $c_{k,1},...,c_{k,n}$ such that $t(v_k) = c_{k,1}\vec{b}_1 + \dots + c_{k,n}\vec{b}_n$.
  Since $T|_\omega$ is the restriction of $T$ to $\dom^\omega(T)$ and
  $T|_\omega$ is linear, we have
  \[  t(v_k) = (T|_\omega)^k(t(v)) = c_{0,1}(T|_\omega)^k(\vec{b}_1) + \dots + c_{0,n}(T|_\omega)^k(\vec{b}_n) \]

  Suppose for a contradiction that there is some $i$ with $c_{0,i} \neq 0$ and $\lambda_i$ is not an integer.  Without loss of generality, further suppose that $\vec{b}_i$ has the greatest rank $r_i$ among all such indices.  We will show that $c_{k,i} = \lambda_i^kc_{0,i}$ for all $k$, which is a contradiction since each $c_{k,i}$ is an integer and $\lambda_i$ is not an integer.
  
  By the above, for each $\vec{b}_j$ we have  
  $T|_\omega^k(\vec{b}_j) = \lambda_j^k\vec{b}_j + \vec{v}_j$ for some $\vec{v}_j$ satisfying
    $(T|_\omega - \underline{\lambda_j})^{r_i-1}(\vec{v}_j) = 0$, and we have
      \[  t(v_k) = (T|_\omega)^k(t(v)) = c_{0,1}(\lambda_1^k\vec{b}_1 + \vec{v}_1) + \dots + c_{0,n}(\lambda_n^k\vec{b}_n + \vec{v}_n) \]
    Since $r_i$ is maximal, every $\vec{v}_j$ is orthogonal to $\vec{b}_i$ (w.r.t. the basis $\vec{b}_1,\dots,\vec{b}_n$), and so $c_{k,i}$ is simply $\lambda_i^kc_{0,i}$.

\end{mproofEnd}

\begin{figure}[t]
\centering
  \begin{tikzpicture}
    \node (program) {\begin{minipage}{4.5cm}
\begin{lstlisting}[style=base,numbers=left,xleftmargin=2em]
c := 0    
while (x % 2 == 0) do
 x = x / 2
 c = c + 1
\end{lstlisting}
  \end{minipage}};

  \node [right of=program, node distance=6cm] (tf) {
  \begin{minipage}{4cm}
  \begin{align*}
        F(x,c,x',c') = &\quad (2 \mid x) \\ 
        &\land (x - 1 \leq 2x' \land 2x' \leq x) \\
        &\land (c' = c + 1)
  \end{align*}
  \end{minipage}
  };
 
  \end{tikzpicture}
  \caption{An example program that contains integer division. \label{fig:cta-eg2}}
\end{figure}

\begin{mexample}
 Figure~\ref{fig:cta-eg2} shows a loop that computes the number of trailing $0$'s in the binary representation of integer $x$ and its corresponding transition formula.
 The homogenization of the $\QDATS{}$-reflection of $F$ is the $\QDLTS{}$ $T$, where:
 \[
 R_T \defeq
  \set{
 \tuple{\begin{bmatrix}x\\c\\h\end{bmatrix}, \begin{bmatrix}x\\c\\h\end{bmatrix}}:
 \begin{bmatrix}
 x' \\ c' \\ h' 
 \end{bmatrix} =
 \begin{bmatrix}
 \frac{1}{2} & 0 & 0 \\
 0 & 1 & 1\\
 0 & 0 & 1
 \end{bmatrix}
 \begin{bmatrix}
 x \\ c \\ h
 \end{bmatrix}}
  \]
  The $\omega$-domain of $T$ is the whole state space $\mathbb{Q}^3$.
Since the eigenvector  $\transpose{\begin{bmatrix}
1 & 0 & 0
\end{bmatrix}}$ of the transition matrix 
corresponds to a non-integer eigenvalue $\frac{1}{2}$,
the $x$-coordinate of states  in $\Int(T)$ must be $0$; i.e.,
$\Int(T) = \{ (x, c, y): x = 0 \}$.
We conclude that $x \neq 0$ is a sufficient condition for the loop to terminate.
\end{mexample}

\subsubsection{The mortal precondition operator}
 Algorithm~\ref{alg:end-to-end-mp} shows how to compute a mortal precondition for an LIA transition formula $F(\vec{x},\vec{x}')$ (i.e., a sufficient condition for which $F$ terminates).  The algorithm operates as follows.  First, we compute a $\QDATS{}$-reflection of $F$, and homogenize to get a $\QDLTS{}$ $T$ and an \textit{affine} simulation $t : F \rightarrow T$.  Let $p$ denote an (arbitrary) projection from $S_T$ onto $\Int(T)$ (so $p$ is a simulation from $T$ to $\IntR{T}$).
 We then compute an LIA formula $G$ which represents the states $\vec{w}$ of $T
 _{\Int(T)}$ such that there is some 
$v \in \dom(F)$ such that
$t(v) \in \Int(T)$ and $p(t(v)) = \vec{w}$.  Letting $(H_0,...,H_P)^\omega$ be the characteristic sequence $\chi(G, \IntR{T})$, we have that for any $v \in \dom^\omega(F)$,  $t(v)$ must belong to $\Int(T)$ and $p(t(v))$ satisfies each $H_i$, so we define
\[ \mp(F) \defeq \{ v \in S_F : t(v) \notin \Int(T) \text{ or } v \not\models \bigwedge_i H_i(p(t(\vec{x}))) .\]

Within the context of the algorithm, we suppose that states of $F$ are represented by $n$-dimensional vectors, states of $T$ are represented as $m$-dimensional vectors, and state of $\IntR{T}$ are represented as $q$-dimensional vectors.  The affine simulation $t$ is represented in the form $\vec{x} \mapsto A\vec{x} + \vec{b}$, where $A \in \mathbb{Z}^{m \times n}$ and $\vec{b} \in \mathbb{Z}^m$, the projection $p$ as a $\mathbb{Z}^{q \times m}$ matrix, and the linear map $\IntR{T}$ as a $\mathbb{Q}^{q \times q}$ matrix.  The fact that $p$ and $t$ have all integer (rather than rational) entries is without loss of generality, since any simulation can be scaled by the least common denominator of its entries.

\begin{algorithm}
  \Input{A transition formula $F(\vec{x},\vec{x'}) \in \TF$ in linear integer arithmetic.}
  \Output{A mortal precondition $\mp(F)$ for $F$.}
  $A \gets \aff(F)$ \tcc*{Affine hull \cite{VMCAI:RSY2004}; Lemma~\ref{lem:best-ats}}
  $\tuple{D,d} \gets \alpha_{\Det(A)}(A)$ \tcc*{Determinize; Lemma~\ref{lem:determinization}}
  $\tuple{V,q} \gets $ $\QDATS{}$-reflection of $D$ \tcc*{Algorithm~\ref{alg:rational-spec-abstraction}}
  $v \gets q \circ d$ \tcc*{$\tuple{V,v}$ is a $\QDATS{}$-reflection of $F$}
  $\tuple{T,h} \gets \Homogenize(V)$ \tcc*{Homogenization of $V$}
  $t \gets h \circ v$ \tcc*{$t$ is an affine simulation $F \rightarrow T$}
  $p \gets $ (any) linear projection of $S_T$ onto $\Int(T)$\;
  $C \gets $ matrix such that
  $C\vec{w} = 0 \iff \vec{w} \in \Int(T)$\;
  Let $G(\vec{w}) \gets \exists \vec{x}, \vec{x'}.  F(\vec{x}, \vec{x}') \land \vec{w} = p (t( \vec{x})) \land C t( \vec{x}) = 0$\;
  $(H_0(\vec{w}), \dots, H_P(\vec{w}))^\omega \gets \chi(G(\vec{w}), \IntR{T})$ \tcc*{Section~\ref{sec:dta}}
  \Return{$\lnot \left( \left(\bigwedge_i H_i(p(t( \vec{x})))\right) \land C t( \vec{x}) = 0 \right)$}
  \caption{Procedure for computing $\mp(F)$.}
  \label{alg:end-to-end-mp}
  \end{algorithm}

\begin{theoremEnd}[normal]{theorem}[Soundness]
  For any transition formula $F$, for any state $s$ such that $s
  \in \mp(F)$, we have $s \notin \dom^\omega(F)$.
  \label{thm:mp-dta-soundness}
\end{theoremEnd}
\begin{mproofEnd}
  Let $T$, $t$, $p$, $C$, $G$, and $H_0,\dots,H_P$ be as in Algorithm~\ref{alg:end-to-end-mp}.
  We prove the contrapositive: we assume $v \in \dom^\omega(F)$ and prove $v \notin mp(F)$, or equivalently $v \models 
  H_i(p (t( \vec{x})))$ for each $i$ and $t(v) \in \Int(T)$.  We have $t(v) \in \Int(T)$ by Lemma~\ref{lem:integer-spec-leads-to-soundess}, so it remains only to show that $v \models H_i(p (t( \vec{x})))$ for each $i$.
  
  Since $v \in \dom^\omega(F)$, there exists an infinite trajectory of $F$
  starting from $v$: $v \rightarrow_F v_1 \rightarrow_F  v_2 \rightarrow_F \ldots$.
  For any $j$, let $\vec{w}_j = \IntR{T}^j(p(t(v)))$.  Since $p \circ t$ is an (affine) simulation, we have $\vec{w}_j = p(t(v_j))$ for all $j$.  It follows that for any $j$, we have
  $[v_j,v_{j+1}] \models F(\vec{x},\vec{x}') \land \vec{w}_j = p(t(\vec{x}_j)) \land Ct(\vec{x}_j) = 0$, and so
  $G(\vec{w}_j) = \exists \vec{x},\vec{x}'. F(\vec{x},\vec{x}') \land \vec{w}_j = p(t(\vec{x})) \land Ct(\vec{x}) = 0$ holds for all $j$.  By Theorem~\ref{thm:periodic-truth-value-of-guard-under-linear-map},
  $H_i(p(t( \vec{x})))$ holds for all $H_i$.
\end{mproofEnd}


Monotonicity is a desirable property for termination analysis:
it guarantees that more information into the analysis always leads to better results.
In our context, monotonicity also guarantees that if we cannot prove termination 
using the $\mp$ operator that we defined, then \textit{any} linear abstraction of the loop
does not terminate.
\begin{theoremEnd}[category=monotonicity]{theorem}[Monotonicity]
  For any transition formulas $F_1$ and $F_2$ such that $F_1 \models F_2$, we
  have $\mp(F_2) \models \mp(F_1)$.
  \label{thm:mp-dta-monotonicity}
\end{theoremEnd}
\begin{mproofEnd}
Let $\tuple{A_1,m_1}$ and $\tuple{A_2,m_2}$ be $\QDATS{}$-reflections of $F_1$ and $F_2$, respectively.  Let $\tuple{T,h_1} = \Homogenize(A_1)$ and $\tuple{U,h_2} = \Homogenize(A_2)$ be their homogenizations.  Define $t \defeq h_1 \circ m_1$ and $u \defeq h_2 \circ m_2$ to be affine simulations $t : F_1 \rightarrow T$ and $u : F_2 \rightarrow U$.
  Since $F_1 \models F_2$ and $ \tuple{A_2, m_2}$ is a $\QDATS{}$-abstraction of $F_2$, it is also an abstraction of $F_1$.
  Since $ \tuple{A_1, m_1}$ is the $\QDATS{}$-reflection of $F_1$, there exists a
  simulation $\bar{m} : A_1 \rightarrow A_2$ 
  such that $m_2 = \bar{m} \circ m_1$.  Then $\overline{u} \defeq \Hom(\overline{m})$ is a simulation from $T$ to $U$ such that
  $u = \bar{u} \circ t$.

  We prove the contrapositive 
  $\neg \mp(F_1) \models \neg \mp(F_2)$,
  which by Lemma~\ref{lem:semantics-of-char-sequences} is equivalent\footnote{The definition of $\mp$ is given in Algorithm~\ref{alg:end-to-end-mp}. For the case of $F_1$, we use the state formula $\exists \vec{x}, \vec{x}'. F_1(\vec{x}, \vec{x}') \land \vec{y} = t(\vec{x})$ as the LIA formula $G$ in Lemma~\ref{lem:semantics-of-char-sequences}. Similarly, we use $\exists \vec{x}, \vec{x}'. F_2(\vec{x}, \vec{x}') \land \vec{w} = u(\vec{x})$ as the LIA formula for $F_2$.} to
  showing that for all $v \in S_{F_1}$ such that $t(v) \in \mathrm{SAT}^\infty(T, \exists \vec{x}, \vec{x}'. F_1(\vec{x}, \vec{x}') \land \vec{y} = t(\vec{x}))$, we have $u(v) \in \mathrm{SAT}^\infty(U, \exists \vec{x}, \vec{x}'. F_2(\vec{x}, \vec{x}') \land \vec{w} = u(\vec{x}))$

  Let $v \in S_{F_1}$ such that  $t(v) \in \mathrm{SAT}^\infty(T, \exists \vec{x}, \vec{x}'. F_1(\vec{x}, \vec{x}') \land \vec{y} = t(\vec{x}))$.
  By the definition of $\mathrm{SAT}^\infty$, we know that there exists 
  an infinite trajectory $\vec{y}_1, \vec{y}_2, \dots \in \Int(T)$ such that 
  $t(v) \rightarrow_T \vec{y}_1 \rightarrow_T \vec{y}_2 \rightarrow_T \dots$ 
  and a $K \in \mathbb{N}$ such that for all $k \geq K$,
  $\exists \vec{x}, \vec{x}'. F_1(\vec{x}, \vec{x}') \land \vec{y}_k = t(\vec{x})$ holds.
  Since $\bar{u} : T \rightarrow U$ is a simulation, there is a corresponding trajectory in $U$:
  $\bar{u}(t(\vec{x})) \rightarrow_U \vec{w}_1 = \bar{u}(\vec{y}_1) \rightarrow_U \vec{w}_2 = \bar{u}(\vec{y}_2) \rightarrow_U \dots$.

  Let $k \geq K$.  Since 
  $\exists \vec{x}, \vec{x}'. F_1(\vec{x}, \vec{x}') \land \vec{y}_k = t(\vec{x})$ holds and $F_1 \models F_2$, we must have
  $\exists \vec{x}, \vec{x}'. F_2(\vec{x}, \vec{x}') \land \vec{y}_k = t(\vec{x})$ and thus
  $\exists \vec{x}, \vec{x}'. F_2(\vec{x}, \vec{x}') \land \overline{u}(\vec{y}_k) = \overline{u}(t(\vec{x}))$.
  Since $\overline{u}(\vec{y}_k) = \vec{w}_k$ and
  $\overline{u}(t(\vec{x})) = u(\vec{x})$, and the above holds for all $k \geq K$, it follows that
  $u(\vec{x}) \in \mathrm{SAT}^\infty(U, \exists \vec{x}, \vec{x}'. F_2(\vec{x}, \vec{x}') \land \vec{w} = u(\vec{x}))$.
\end{mproofEnd}

\section{Evaluation} \label{sec:evaluation}

Section~\ref{sec:summary} shows how to compute mortal preconditions for transition formulas.  Using the framework of algebraic termination analysis \cite{arxiv}, we can ``lift'' the analysis to compute mortal preconditions for whole programs.  The essential idea is to compute summaries for loops and procedures in ``bottom-up'' fashion, apply the mortal precondition operator from Section~\ref{sec:summary} to each loop body summary, and then propagate the mortal preconditions for the loops back to the entry of the program (see \cite{arxiv} for more details).  We can verify that a program terminates by using an SMT solver to check that its mortal precondition is valid.


We have implemented our conditional termination analysis (BLACT, \textbf{B}est \textbf{L}inear \textbf{A}bstraction
for \textbf{C}onditional \textbf{T}ermination).
We compare the performance of our analysis
against 2LS~\cite{TOPLAS:CDKSW2018}, Ultimate Automizer~\cite{TACAS:DietschHNSS20} and CPAchecker~\cite{CPACheckerTermination}, the top three
competitors in the termination category of \href{https://sv-comp.sosy-lab.org/2020}{Competition on Software Verification (SV-COMP) 2020}. 
We also compare with ComPACT \cite{arxiv}, which uses the same
algebraic termination analysis framework as BLACT, but uses a 
different method to compute conditional termination arguments for individual loops.

Experiments are run on a virtual machine with Ubuntu 18.04, with 
    a single-core Intel Core i5-7267U @ 3.10GHz CPU and 3 GB of RAM.
    All tools were run with a time limit of $10$ minutes. 
    CPAChecker was run with a heap memory limit of 2500 MB.
    
\paragraph{Benchmarks}
We tested on a suite of 263 programs divided into 4 categories.
The \texttt{termination} and \texttt{recursive} suites contain small programs with challenging termination arguments, while the \texttt{polybench} suite contains 
larger real-world programs that have relatively simple termination arguments.
The \texttt{termination} category consists of the
\textit{non-recursive, terminating} benchmarks from SV-COMP 2020 in the \texttt{Termination-MainControlFlow} suite.  
The \texttt{recursive} category consists of the \textit{recursive, terminating} benchmarks from the \texttt{recursive} directory and \texttt{Termination-MainControlFlow}. Note that 2LS does not handle recursive programs, so we exclude it from the \texttt{recursive} category.
Finally, we created a new test suite \texttt{linear} consisting of terminating programs
whose termination can be proved by computing best linear abstractions and analyzing
the asymptotic behavior of these linear abstractions.
This suite includes: all examples of loops with multi-phase ranking functions from \cite{CAV:BG2017}, loops with disjunctive or modular arithmetic guards, loops that model integer division and remainder calculation, etc.

\begin{figure*}[t]
  \footnotesize\noindent
    \begin{tabular}{lc|cr|cr|cr|cr} 
\toprule
            &         & \multicolumn{2}{c|}{BLACT}     & \multicolumn{2}{c|}{2LS}   & \multicolumn{2}{c|}{UAutomizer} & \multicolumn{2}{c}{CPAChecker}  \\
benchmark   & \#tasks & \#correct    & time            & \#correct & time           & \#correct     & time            & \#correct & time                \\ 
\midrule
termination & 171     & 92           & \textbf{78.8}   & 115       & 1966.0         & \textbf{161}  & 4772.2          & 126       & 12108.6             \\
recursive   & 42      & 4            & \textbf{32.8}   & -         & -              & \textbf{30 }  & 1781.7          & 23        & 530.6               \\
polybench   & 30      & \textbf{30}  & \textbf{82.8}   & 0         & 7602.7         & 0             & 16241.6         & 0         & 4035.8              \\
linear      & 20      & \textbf{20}  & 25.0            & 6         & \textbf{17.6}  & 8             & 2841.3          & 3         & 3470.7              \\ 
\midrule
Total       & 263     & 146          & \textbf{219.4}  & 121       & 9586.3         & \textbf{199 } & 25636.8         & 152       & 20145.7             \\
\bottomrule
\end{tabular}
    \caption{Termination verification benchmarks; time in seconds.  
    \label{tab:all-tools-all-benchmarks}}
\end{figure*}
\begin{figure*}[t]
\centering
  \footnotesize\noindent
   \begin{tabular}{lc|cr|cr|cr} 
\toprule
                    &        & \multicolumn{2}{c|}{BLACT} & \multicolumn{2}{c|}{ComPACT} & \multicolumn{2}{c}{BLACT+ComPACT}  \\
                    & \#tasks & \#correct     & time                            & \#correct     & time                              & \#correct      & time         \\ 
\midrule
termination         & 171    & 92           & \textbf{78.8}                   & 141          & 99.4                              & \textbf{145}  & 105.6        \\
recursive           & 42     & 4            & \textbf{32.8}                   & 31           & 104.7                             & \textbf{32}   & 106.2        \\
polybench           & 30     & \textbf{30}  & \textbf{82.8}                   & \textbf{30}  & 177.5                             & \textbf{30}   & 192.7        \\
linear & 20     & \textbf{20}  & \textbf{25.0}                   & 15           & 209.9                             & \textbf{20}   & 78.6         \\ 
\midrule
Total               & 263    & 146          & \textbf{219.4}                  & 217          & 591.5                             & \textbf{227}  & 483.1        \\
\bottomrule
\end{tabular}
    \caption{Comparing BLACT and ComPACT; time in seconds.
    \label{tab:blact-vs-compact}}
\end{figure*}

\paragraph{How does BLACT compare with the state-of-the-art?}
The comparison of BLACT against existing termination analysis tools across all test suites is shown in
Figure~\ref{tab:all-tools-all-benchmarks}. 
BLACT uses substantially less time than 2LS, UAutomizer, and CPAChecker.  BLACT is able to prove fewer benchmarks on the \texttt{termination} and \texttt{recursive} suites--these benchmarks are designed to have difficult termination arguments, which most tools approach by using a portfolio of different termination techniques (e.g., UAutomizer synthesizes
linear, nested, multi-phase, lexicographic and piecewise ranking functions).  We investigate the use of BLACT in a portfolio solver in the following.


BLACT solves all tasks within the \texttt{polybench} suite, which contains 
larger numerical programs that have simple termination arguments.  2LS, UAutomizer, and CPAChecker are unable to solve any, as they run out of time or memory.  Nested loops are a problematic pattern that appears in these tasks, e.g.,
\begin{lstlisting}[style=base]
for(int i = 0; i < 4096; i += step)
  for (int j = 0; j < 4096; j += step)
    // no modifications to i, j, or step
\end{lstlisting}
For such loops, BLACT is guaranteed to synthesize a conditional termination argument that is \textit{at least} as weak as \cinline{step > 0} (regardless of the contents of the inner loop) by monotonicity and the fact that the loop body formula entails $i < 4096 \land i' = i + \cinline{step} \land \cinline{step}' = \cinline{step}$.
UAutomizer, CPAChecker, and 2LS cannot make such theoretical guarantees.

The performance of 2LS, UAutomizer, and CPAChecker on the \texttt{linear} suite demonstrate that BLACT is capable of proving termination of programs that lie outside the boundaries of any other tool.



\paragraph{Can BLACT improve a portfolio solver?}
We compare BLACT and ComPACT in Figure~\ref{tab:blact-vs-compact}.
ComPACT implements several termination techniques, and BLACT can be added as a plug-in.
Adding BLACT to the existing portfolio of methods implemented in ComPACT, 
we can solve 10 additional tasks while adding negligible runtime overhead.  In fact, adding BLACT to the portfolio \textit{decreases} the amount of time it takes for ComPACT to complete the \texttt{linear} suite.  
Note that the combined tool also yields the 
most powerful tool among all those we tested, 
for all except the \texttt{termination} category.


\section{Related work} \label{sec:related}

\paragraph{Termination analysis of linear loops}

The universal termination problem for linear loops (or \textit{total
 deterministic affine transition systems}, in the terminology of
Section~\ref{sec:dta}) was posed by Tiwari \citet{CAV:Tiwari2004}.  The case
of linear loops over the reals was resolved by Tiwari \citet{CAV:Tiwari2004},
over the rationals by
Braverman \citet{CAV:Braverman2006}, and finally over the
integers by
Hosseini et al. \citet{ICALP:HOW2019}.
In principle, we can combine any of these techniques with our algorithm for computing
$\DATS{}$-reflections of transition formulas to yield a sound (but incomplete) termination analysis.  The significance of computing a $\DATS{}$-reflection (rather than just ``some'' abstraction) is that is provides an algorithmic completeness result: if it is possible to prove termination of a loop by exhibiting a terminating linear dynamical system that simulates it, the algorithm will prove termination.

The method introduced in Section~\ref{sec:dominant-term-analysis} to compute characteristic sequences of inequalities is based on the method that Tiwari used to prove decidability of the universal termination problem for linear
loops with (positive) real spectra \citet{CAV:Tiwari2004}.  Tiwari's condition of having
\textit{real} spectra is strictly more general than the
\textit{integer} spectra used by our procedure; requiring that the
spectrum be integer allows us express the $\mathbf{DTA}$ procedure in linear \textit{integer} arithmetic rather than real arithmetic.  Similar procedures appear also in \cite{POPL:KBCR2019,CAV:FG2019}.  We note in particular that our results in
Sections~\ref{sec:dta} and~\ref{sec:summary}
subsume 
Frohn and Giesl's decision procedure for universal
termination for upper-triangular linear loops \citet{CAV:FG2019}; since every
rational upper-triangular linear loop has a rational spectrum (and is
therefore a $\QDATS{}$), the mortal precondition computed for any rational upper-triangular linear loop is valid iff the loop is universally terminating.

\paragraph{Linear abstractions}
The formulation of ``best abstractions'' using reflective subcategories is based on the framework developed in \cite{SAS:Kincaid2018}.  A variation of this method was used by Kincaid and Silverman in the context of invariant generation by finding (weak) reflections of linear rational arithmetic formulas in the category of rational vector addition systems \cite{CAV:SK2019}.  This paper is the first to apply the idea to termination analysis.

Kincaid et al. give a method for extracting polynomial recurrence (in)equations that are entailed by a transition formula \cite{PACMPL:KCBR2018}.  The algorithm can also be applied to compute a $\mathbf{TDATS}$-abstraction of a transition formula.  The procedure does not guarantee that the $\mathbf{TDATS}$-abstraction is a reflection (\textit{best} abstraction); Proposition~\ref{prop:no-best-lts} demonstrates that no such procedure exists.  In this paper, we generalize the model to allow non-total transition systems, and show that best abstractions do exist.  The techniques from Section~\ref{sec:linear-abs} can be used for invariant generation, improving upon the methods of \cite{PACMPL:KCBR2018}.

Kincaid et al. show that the category of linear dynamical systems with \textit{periodic rational} spectrum is a reflective subcategory of the category of linear dynamical systems \cite{POPL:KBCR2019}.  A complex number $n$ is periodic rational if $n^p$ is rational for some $p \in \mathbb{N}^+$.  Combining this result with the technique from Section~\ref{sec:linear-abs} yields the result that the category of $\mathbf{DATS}$ with periodic rational spectrum is a reflective subcategory of $\mathbf{TF}$.  The decision procedure from Section~\ref{sec:dta} extends easily to the periodic rational case, which results in a strictly more powerful decision procedure.



\paragraph{Termination analysis}
Termination analysis, and in particular conditional termination analysis, has been widely studied.  Work on the subject can be divided into practical termination analyses that work on real programs (but offer few theoretical guarantees) \cite{CAV:CGLRS2008,POPL:CC2012,SAS:Urban2013,ESOP:UM2014,SAS:UM2014,PLDI:LQC2015,CAV:DU2015,CAV:GG2013,TACAS:BBLORR2017}, and work on simplified model (such as linear, octagonal, and polyhedral loops) with strong guarantees (but cannot be applied directly to real programs) \cite{VMCAI:PR2004,CAV:BMS2005,TACAS:LH2014,CAV:BG2017,CAV:Tiwari2004,ICALP:HOW2019,CAV:Braverman2006}.  This paper aims to help bridge the gap between the two, by showing how to apply analyses for linear loops to general programs, while preserving some of their good theoretical properties, in particular monotonicity.


\bibliographystyle{splncs04}
\bibliography{references}

\eject

\appendix

\section{Proofs}

\printProofs

\subsection{Homogenization}
Section~\ref{sec:qdats} gives a simple construction of the homogenization of a $\DATS$ over a state space of the form $\mathbb{Q}^n$.  We now give a general definition (for arbitrary linear spaces) that is more convenient for our technical development.
Suppose that $T$ is a non-empty affine transition system, and let
$\tr{T}{x_0}{x_0'} \in R_T$ be an arbitrary transition (the choice is
irrelevant).  Define $\Hom(T)$ to be the transition system where
\begin{align*}
  S_{\Hom(T)} & \defeq S_T \times \mathbb{Q}\\ R_{\Hom(T)}
  & \defeq \set{ \tuple{\tuple{x,c},\tuple{x',c}}
  : \tr{T}{(x+(1-c)x_0)}{(x'+(1-c)x_0')} }
\end{align*}
Observe that $\Hom(A)$ is linear since $R_{\Hom(A)}$ is an affine
space and contains the origin, and that we have
$\tr{\Hom(A)}{\tuple{x,1}}{\tuple{x',1}}$ if and only if
$\tr{A}{x}{x'}$.  Homogenization extends to a functor: for any
simulation $\sim{s}{A}{B}$ between non-empty affine transition systems, there is
a corresponding simulation $\sim{\Hom(s)}{\Hom(A)}{\Hom(B)}$ with
$\Hom(s)(x,c) \defeq \tuple{s(x),c}$.

\printProofs[homogenization]

\subsection{Computation of $\QDLTS{}$-reflections}
We give an example for the computation of a  $\QDLTS{}$-reflection of a given $\DLTS{}$.
\begin{mexample} \label{ex:best-qdlts-abstraction}
 Consider the $\DLTS{}$ $T$ with transition relation
 \[
 R_T \defeq \set{\tuple{\begin{bmatrix}w\\x\\y\\z\end{bmatrix},\begin{bmatrix}w'\\x'\\y'\\z'\end{bmatrix}} :
 \begin{bmatrix}
 1 & 0 & 0 & 0\\
 0 & 1 & 0 & 0\\
 0 & 0 & 1 & 0\\
 0 & 0 & 0 & 1\\
 0 & 0 & 0 & 0
 \end{bmatrix}
 \begin{bmatrix}x'\\y'\\z'\end{bmatrix} =
  \begin{bmatrix}
 1 & 1 & 0 & 0 \\
 1 & 1 & 0 & 0 \\
 0 & 0 & 0 & 1 \\
 0 & 0 & -1 & 0\\
 1 & -1 & 0 & 0
 \end{bmatrix}
 \begin{bmatrix}w\\x\\y\\z\end{bmatrix}}
 \]
 We can calculate the $\omega$-domain of $T$
 $\dom^\omega(T) = \set{ \transpose{\begin{bmatrix}w&x&y&z\end{bmatrix}} : w = x}$, which has
 a basis $B = \transpose{\begin{bmatrix}1 & 1 & 0 & 0\end{bmatrix}}, \transpose{\begin{bmatrix}0 & 0 & 1 & 0\end{bmatrix}}, \transpose{\begin{bmatrix}0 & 0 & 0 & 1\end{bmatrix}}$.  With respect to $B$, $T|_\omega$ corresponds to the matrix
 \[
 T|_\omega = \begin{bmatrix}
 2 & 0 & 0\\
 0 & 0 & 1\\
 0 & -1 & 0
 \end{bmatrix}
 \]
 and so we have $\spec(T) = \set{2,i,-i}$.  We may calculate $E_\mathbb{Q}(T)$ by finding (generalized) left eigenvectors with eigenvalue 2, the only rational number in $\spec(T)$:
\begin{align*} E_\mathbb{Q}(T) &=
 \set{ \transpose{\vec{v}} : \transpose{\vec{v}}
 \underbrace{\begin{bmatrix}
 1 & 0 & 0\\
 1 & 0 & 0\\
 0 & 1 & 0\\
 0 & 0 & 1
 \end{bmatrix}}_{B}
 \left(
 \underbrace{
 \begin{bmatrix}
 2 & 0 & 0\\
 0 & 0 & 1\\
 0 & -1 & 0
 \end{bmatrix}}_{T|_\omega} - \underbrace{\begin{bmatrix}
 2 & 0 & 0\\
 0 & 2 & 0\\
 0 & 0 & 2
 \end{bmatrix}}_{2I}\right) = 0}\\
 &=
 \textit{span}(\begin{bmatrix}1 & 1 & 0 & 0\end{bmatrix}, \begin{bmatrix}1 & -1 & 0 & 0\end{bmatrix})
 \end{align*}
Finally, we have $\tuple{Q,q} = \alpha_{E_{\mathbb{Q}}(T)}(T)$, where
 \[R_Q = \set{\tuple{\begin{bmatrix}a\\b\end{bmatrix},\begin{bmatrix}a'\\b'\end{bmatrix}}: \begin{bmatrix}1 & 0\\0 & 1\\0&0\end{bmatrix}\begin{bmatrix}a'\\b'\end{bmatrix}
 =\begin{bmatrix}2 & 0\\0 & 0\\0 & 1\end{bmatrix}\begin{bmatrix}a\\b\end{bmatrix}}
 \hspace*{1cm} q = \begin{bmatrix}1 & 1 & 0 & 0\\1 & -1 & 0 & 0\end{bmatrix} \]
 $Q$ is deterministic and has rational spectrum, so $\tuple{Q,q}$ is a $\QDLTS{}$-reflection of $T$.
\end{mexample}

\subsection{Monotonicity}

We study properties of states 
that belongs to $\mp(F)$, which are later used to prove 
the monotonicity theorem (Theorem~\ref{thm:mp-dta-monotonicity}).
First we define a set of initial states of a $\QDLTS$ $T$ such that
the trajectories emanating from these states always lie within 
$I(T)$, and for states that are far enough, they satisfy LIA formula $G$.
\begin{definition}
Define $\mathrm{SAT}^\infty (T, G(\vec{x})) = \{\vec{y} \in I(T): \exists \vec{y}_1, \vec{y}_2, \ldots \in I(T).
     \vec{y} \rightarrow_T \vec{y}_1 \rightarrow_T \vec{y}_2 \cdots \land \exists K. \forall k > K. G(\vec{y}_k)\}$.
\end{definition}

We then point out that the set of states returned by
Algorithm~\ref{alg:end-to-end-mp} is characterized by $\mathrm{SAT}^\infty $.
\begin{lemma}
  \label{lem:semantics-of-char-sequences}
  Let $T$ be a $\QDLTS$ such that $S_T = \mathbb{Q}^n$.
  Suppose we compute its integer-spectrum restriction, where 
   $M_T$ is the matrix representation of $T|_{I(T)}$, and
  $P_T$ is the matrix representation of some linear projection of $S_T$ onto $I(T)$ (w.l.o.g. we assume $P_T$ has integer entries).
  Let $D_T$ be a matrix such that $D_T \vec{y} = 0$ iff $\vec{y} \in I(T)$.
  Let $G(\vec{x})$ be any LIA formula over $\vec{x}$, and
  $G'(\vec{w}) = \exists \vec{x}. G(\vec{x}) \land D_T \vec{x} = 0 \land \vec{w} = P_T \vec{x}$.
  Let $(H_0(\vec{w}), \dots, H_p(\vec{w}))^\omega = \chi(G'(\vec{w}),M_T)$.
  Then the following are equivalent:
  \begin{enumerate}
    \item $H_0(P_T \vec{y}) \land \dots \land H_p(P_T \vec{y}) \land D_T \vec{y} = 0$ holds, and
    \item $\vec{y} \in \mathrm{SAT}^\infty (T, G(\vec{x}))$.
  \end{enumerate}
\end{lemma}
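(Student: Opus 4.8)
The plan is to reduce the claimed equivalence to Theorem~\ref{thm:periodic-truth-value-of-guard-under-linear-map}, applied to the LIA formula $G'$ and the matrix $M_T$ (this is legitimate: $M_T$ represents $T|_{I(T)}$, which has integer spectrum by construction, and $G'$ is an LIA formula, being built from a conjunction of LIA atoms by existential quantification). First I would handle the case $\vec{y} \notin I(T)$, where both (1) and (2) fail --- (1) because $D_T\vec{y} \neq 0$, and (2) because $\mathrm{SAT}^\infty(T, G(\vec{x})) \subseteq I(T)$ by definition --- so from here on I assume $\vec{y} \in I(T)$, whence $D_T\vec{y} = 0$ and (1) reduces to $\bigwedge_{i=0}^p H_i(P_T\vec{y})$.

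Next I would pin down the trajectory. Since $I(T) \subseteq \dom^\omega(T)$ and $T$ is deterministic, $\vec{y}$ has a unique infinite $T$-trajectory $\vec{y} = \vec{y}_0 \rightarrow_T \vec{y}_1 \rightarrow_T \vec{y}_2 \rightarrow_T \cdots$, and since $I(T)$ is invariant under $T|_\omega$ this trajectory lies in $I(T)$; in particular it is the only possible choice for the trajectory in the definition of $\mathrm{SAT}^\infty$, so (2) reduces to ``$G(\vec{y}_k)$ holds for all but finitely many $k$''. The rest rests on two facts. (i) $P_T\vec{y}_k = M_T^k(P_T\vec{y})$ for all $k$: because $M_T$ represents $T|_{I(T)}$ in some basis of $I(T)$ and $P_T$ restricted to $I(T)$ is the corresponding coordinate map, $P_T(T|_{I(T)}(\vec{z})) = M_T(P_T\vec{z})$ for $\vec{z} \in I(T)$; since $\vec{y}_{k+1} = T|_{I(T)}(\vec{y}_k)$, induction on $k$ gives the claim. (ii) For $\vec{z} \in I(T)$, $G(\vec{z})$ holds if and only if $G'(P_T\vec{z})$ holds: forward, $\vec{z}$ itself witnesses the existential of $G'$ (it satisfies $D_T\vec{z} = 0$, and $P_T\vec{z} = P_T\vec{z}$ trivially); backward, any witness $\vec{x}$ has $D_T\vec{x} = 0$, hence $\vec{x} \in I(T)$, and $P_T\vec{x} = P_T\vec{z}$ forces $\vec{x} = \vec{z}$ by injectivity of $P_T$ on $I(T)$, whence $G(\vec{z})$.

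Putting (i) and (ii) together, ``$G(\vec{y}_k)$ holds for all but finitely many $k$'' is equivalent to ``$G'(M_T^k(P_T\vec{y}))$ holds for all but finitely many $k$'', which by Theorem~\ref{thm:periodic-truth-value-of-guard-under-linear-map} --- specifically the remark that the guard holds on all but finitely many iterates exactly when the conjunction of all the $H_i$'s holds --- is equivalent to $\bigwedge_{i=0}^p H_i(P_T\vec{y})$. Chaining the equivalences gives (1) $\Leftrightarrow$ (2). The bookkeeping is routine; the one delicate point is fact (ii), and in particular that $\vec{y}_k$ is a legitimate value for the quantified variable of $G'$. This is automatic when every $\vec{y}_k$ is integral, but in general $\vec{y}_k$ can be a non-integral rational since $M_T$ need only have integer spectrum, not integer entries; here one uses that the formulas $G$ arising in the construction of $\mp$ entail integrality of their argument (being obtained from transition formulas), so that $G(\vec{y}_k)$ and $G'(P_T\vec{y}_k)$ are false together on non-integral states and fact (ii) still goes through.
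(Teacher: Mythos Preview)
Your proposal is correct and follows essentially the same approach as the paper: both arguments rest on the identity $P_T\vec{y}_k = M_T^k(P_T\vec{y})$, the injectivity of $P_T$ on $I(T)$, and an appeal to Theorem~\ref{thm:periodic-truth-value-of-guard-under-linear-map}. The paper proves the two implications separately rather than packaging them as your equivalence~(ii), but the content is the same; your observation about the integrality of $\vec{y}_k$ when it is used as a witness for the existential in $G'$ is a genuine subtlety that the paper's proof passes over silently.
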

\begin{proof}
  First, we prove that if $\vec{y}$ satisfies $\bigwedge_i H_i(P_T \vec{y})$ and $D_T \vec{y} = 0$, then
  $\vec{y} \in \mathrm{SAT}^\infty (T, G(\vec{x}))$.
  Since $D_T \vec{y} = 0 $, we have $\vec{y} \in I(T)$.
  Since $I(T)$ is a subspace of $\dom^\omega(T)$,
  there is an infinite trajectory $\vec{y} \rightarrow_T \vec{y}_1 \rightarrow_T \vec{y_2} \rightarrow_T \ldots$.
  Since $M_T$ is a representation of $T|_{I(T)}$ and $P_T$ is a representation of a projection from $S_T$ onto $I(T)$, 
  for any $k$  we have that $M_T^kP_T\vec{y} = P_T\vec{y}_k$.
  By Theorem~\ref{thm:periodic-truth-value-of-guard-under-linear-map}
  and the assumption that $\bigwedge_i H_i(P_T\vec{y})$ holds,
  there exists $K$ such that $\forall k \geq K$, $G'(P\vec{y}_k)$ holds.
  It follows that for $k \geq K$, there is some $\vec{z}_k$ such that
  $G(\vec{z}_k) \land D_T \vec{z}_k = 0 \land P_T\vec{y}_k = P_T \vec{z}_k$ holds.  Since $D_T\vec{z} = 0$, we have $\vec{z} \in I(T)$.  Since $P_T$ is injective on $I(T)$, and $P_T\vec{x}_k = P_T\vec{y}_k$, we must have $\vec{x}_k = \vec{y}_k$, and thus $G(\vec{y}_k)$ holds for all $k \geq K$, and finally $\vec{y} \in \mathrm{SAT}^\infty (T, G(\vec{x}))$.

  Next, we prove that given $\vec{y} \in \mathrm{SAT}^\infty (T, G(\vec{y}))$, 
  it must satisfy $\bigwedge_i H_i(P_T \vec{y})$ and $D_T \vec{y} = 0$.
  Since $\vec{y} \in I(T)$, we have $D_T\vec{y} = 0$.
  Since $\vec{y} \in \mathrm{SAT}^\infty (T, G(\vec{y}))$, there is an infinite trajectory 
  $\vec{y} \rightarrow_T \vec{y}_1 \rightarrow_T \vec{y_2} \rightarrow_T \ldots$ and a $K \in \mathbb{N}$ such that
  $G(\vec{y}_k)$ holds for all $k \geq K$.  Letting $\vec{w}_k = P_T\vec{y}_k$ for all $k$, we have that for all $k \geq K$
  $G(\vec{y}_k) \land D_T\vec{y}_k = 0 \land \vec{w}_k = P_T\vec{y}_k$, and thus $G'(\vec{w}_k)$ holds.
  Since $M_T^kP_T\vec{y} = P_T\vec{y}_k = \vec{w}_k$ for all $k$,
  we have $\bigwedge_i H_i(P_T \vec{y})$ by Theorem~\ref{thm:periodic-truth-value-of-guard-under-linear-map}.
\end{proof}

\printProofs[monotonicity]


\end{document}